\theoremstyle{plain} 
\newtheorem{theorem}{Theorem}
\theoremstyle{definition}               
\newtheorem{definition}[theorem]{Definition}
\newtheorem*{remark}{Remark}
\theoremstyle{plain}
\newtheorem{lemma}[theorem]{Lemma}
\newtheorem{Proposition}[theorem]{Proposition}
\newtheorem{Conjecture}[theorem]{Conjecture}
\definecolor{Mathematica1}{rgb}{0.368417, 0.506779, 0.709798}
\definecolor{Mathematica2}{rgb}{0.880722, 0.611041, 0.142051}
\definecolor{Mathematica3}{rgb}{0.560181, 0.691569, 0.194885}
\definecolor{BSorange}{RGB}{140,50,0}
\newcommand{\calC}{{\mathcal C}}
\newcommand{\calI}{{\mathcal I}}
\newcommand{\calL}{{\mathcal L}}
\newcommand{\calS}{{\mathcal S}}
\newcommand{\calW}{{\mathcal W}}
\newcommand{\Tr}{{\rm Tr}}
\begin{document}

\title{\bfseries\Large Immersed figure-8 annuli and anyons}
\author{Bowen Shi\\[1mm]
\it\small Department of Physics, University of California at San Diego, La Jolla, CA 92093, USA}

\date{\today}

\maketitle

\begin{abstract}
    Immersion (i.e., local embedding) is relevant to the physics of topologically ordered phases through the entanglement bootstrap. An annulus can be immersed in a disk or a sphere as a ``figure-8", which cannot be smoothly deformed to an embedded annulus. We investigate a simple problem: is there an Abelian state on the immersed figure-8 annulus, locally indistinguishable from the ground state of the background physical system? We show that if the answer is affirmative, a strong sense of isomorphism must hold: two homeomorphic immersed regions must have isomorphic information convex sets, even if they cannot smoothly deform to each other on the background physical system. We explain why to care about strong isomorphism in physical systems with anyons and give proof in the context of Abelian anyon theory. We further discuss a connection between immersed annuli and anyon transportation in the presence of topological defects. In the appendices, we discuss related problems in broader contexts.
  \end{abstract}

\section{Introduction}

It is fascinating that many universal properties of a many-body system, such as topologically ordered phases, symmetry-protected phases, and quantum critical points, can be extracted from a single wave function \cite{kitaev2006topological,
levin2006detecting,
Calabrese:2009qy,
Li-Haldane2008,
Marvian2013,
Shapourian2017,
Zou2020,
Kim2021,
Fan2022Q,
Cian2022,
Lin2023}.
A further surprise is that in recent years, we have seen a hope to derive~\cite{Kim2013,Shi2018} the rules that govern those emergent theories from conditions on a state;
entanglement bootstrap \cite{shi2020fusion,Shi:2020domainwall,knots-paper,Shi2023-Kirby} is a framework that aims to do this. In entanglement bootstrap, we start with a single wave function (or reference state) on a topologically trivial region, e.g., a ball or a sphere. We impose some conditions on the wave function as the starting point and derive (bootstrap) laws of the emergent theory from there.  
On the way to deriving the emergent physical laws, we identify information-theoretic concepts (forms of many-body entanglement), such as information convex sets and the modular commutator. 
The goal of entanglement bootstrap has similarities and distinctions with quantum field theory, bootstrap in other physical contexts, renormalization group, categorical theory, and solvable models. Some of these aspects have been discussed in recent works on this subject.  

In this work, we investigate a simple aspect of entanglement bootstrap, the role of immersion. As we shall explain, immersion, i.e., local embedding of a topological space to another, is natural from the internal theoretical structure of the entanglement bootstrap. (Here, we are mainly interested in the case that a topological space is immersed in a background of the same space dimension. See the immersed ``figure-8" annulus in Fig.~\ref{fig:figure8-intro} for an example.) 
The importance of immersion is noticed only gradually: see \cite{Shi2023-Kirby} for a state-of-the-art discussion and \cite{Shi2019-Verlinde} for an early application with the terminology ``immersion" unnoticed. 
We suspect that the full extent of the benefits we can reap from immersion remains largely hidden beneath the surface. 

To motivate the usage of immersed regions, let us first recall why embedded regions are of interest in the study of topologically ordered systems with emergent anyonic excitations. 
On closed manifolds, such as tori, there can be multiple locally indistinguishable ground states. The information that distinguishes those ground states cannot be seen on any ball-shaped subsystem. Meaningful differences can, nevertheless, be detected if we examine subsystems with interesting topologies,   
such as annuli and (punctured) tori.    

In entanglement bootstrap, this intuition is made precise with the concept ``information convex set" $\Sigma(\Omega)$. It is a set of density matrices on $\Omega$, locally indistinguishable from the reference state. 
Originally, $\Omega$ is considered to be a subsystem, i.e., a region embedded in the physical system of interest. Smooth deformations of $\Omega$ must preserve the structure of information convex sets, according to the isomorphism theorem \cite{shi2020fusion}. Thus, only the ``topology class" of $\Omega$ is of interest.

Immersed regions are a broader class of regions. They are regions locally embedded in the physical system but need not be globally embedded. Because of this, states in immersed regions can use multiple copies of local Hilbert spaces of the original physical system. For instance, the figure-8 annulus can be immersed in a ball (Fig.~\ref{fig:figure8-intro}), and quantum states on the figure-8 annulus use the Hilbert space of the overlapping region (the region covered by two sheets) twice.  
Local embedding is sufficient for an information convex set to be defined \cite{knots-paper,Shi2023-Kirby,Shi2019-Verlinde} because we only needed local consistency of state in the information convex set with the reference state. Information convex sets are shown to be isomorphic under smooth deformation of the immersed region. These isomorphisms not only preserve the shape of the convex set but also preserve various distance measures between density matrices in the convex set.  

To our knowledge, the topological notion of immersion has rarely appeared in physics before. A notable exception is a sequence of connections observed by Hastings  \cite{Hastings:2013vma} and with Freedman~\cite{Freedman2019} on invertible phases and the classification of quantum cellular automata.

\begin{figure}[h]
    \centering
    \includegraphics[width=0.75\textwidth]{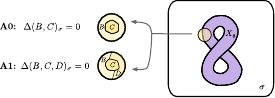}
    \caption{An immersed ``figure-8" annulus ($X_8$) on a background 2-dimensional physical system. The background physical system, often taken to be a ball or a sphere, is equipped with a reference state $\sigma$. For the reference state,  
    entanglement bootstrap axioms {\bf A0} and {\bf A1} are imposed on bounded radius balls such as the yellow ball illustrated. Entropy combinations $\Delta(B,C)$ and $\Delta(B,C,D)$ are defined in Eq.~\eqref{eq:Delta-def}.} 
    \label{fig:figure8-intro}
\end{figure}

What insight can we gain by considering immersed regions? 
One powerful application, in entanglement bootstrap of arbitrary dimensions, is the ability to construct very rich types of immersed punctured manifolds in an $n$-dimensional sphere~\cite{Shi2023-Kirby}. For instance, a punctured torus cannot be embedded in a 2-sphere, but it can be immersed in it. With the trick of healing the puncture, we obtain states on diverse types of closed manifolds, knowing only a state on a ball or a sphere! 
The ability to construct states on closed manifolds from immersion secretly agrees with that of a non-immersion approach mentioned by Kitaev in the context of invertible phases.\footnote{See \cite{Kitaev-2013} at around 1 hour and 35 minutes, where the class of manifolds is referred to as normally framed manifolds.}

Despite the progress mentioned above, the study of immersed regions in entanglement bootstrap is still in its early stages. Some topological manifolds can immerse in topologically distinct ways, even in a sphere. A simple example is the immersed ``figure-8" annulus, which cannot be smoothly deformed to an embedded annulus. One natural question is whether two homeomorphic regions $\Omega, \Omega'$ immersed in the $n$-dimensional sphere in inequivalent classes, still have isomorphic information convex sets (Conjecture~\ref{conj:2D-strong} and \ref{conj:strong}), which we call the strong isomorphism conjecture. One may also wonder if smooth deformation (and also ``tunneling" Sec.~\ref{sec:tunneling-trick}) between different immersed regions can give interesting automorphism of information convex sets. These two problems are closely related, as we shall explain, and both are open problems.

In the main body of this work, we shall focus on the simplest 2-dimensional (2D) context. The setup of the problem is recalled in Sec.~\ref{sec:background}. We note that figure-8 $X_8$ represents the only nontrivial class of immersed annulus on a sphere (Table~\ref{tab:figure-8}).  We explain that the strong isomorphism conjecture for 2D boils down to a simple question about the figure-8 annulus: ``is there an Abelian extreme point in $\Sigma(X_8)$?" See Sec.~\ref{sec:fun}, where the intuition is that an Abelian sector on figure-8 implies that tunneling a figure-8 (Sec.~\ref{sec:tunneling-trick}) not only changes the immersion class but also acts as an isomorphism between information convex sets. Such tunneling operations are powerful enough to relate homeomorphic immersed surfaces in different classes.

Is it true that the figure-8 annulus always detects an Abelian sector?
This problem appears to be nontrivial, and we provide a full solution only for Abelian anyon theories (Sec.~\ref{sec:Proof-Abelian}), i.e., when all the anyon types detected by the embedded annulus are Abelian. 
The more general case is left as a conjecture.

A crucial intuition we used in the proof is that anyons transported along the figure-8 annulus cannot be permuted. We discuss a thought experiment ``transportation experiment" (Sec.~\ref{sec:transportation}), that broadens the scope of this idea and applies it to physical contexts with topological defects. It relates the property of the immersed annulus that thickens the transportation loop to the question of whether anyons are permuted after the transportation. 
Many ideas and questions generalize to higher dimensions. In the appendix, we discuss a few useful tools for thinking about such generalizations.

\section{Background}\label{sec:background}

In this section, we review some necessary background for the entanglement bootstrap~\cite{shi2020fusion,Shi:2020domainwall,knots-paper,Shi2023-Kirby}. We review the setup and axioms, and then review a central concept, the information convex set. We then focus on the concept of immersed regions and related facts. 
In particular, we explain why there are precisely two classes of immersed annuli in a sphere; see Table~\ref{tab:figure-8}.  

\subsection{Setup and axioms}\label{subsec:axioms}

We consider the entanglement bootstrap setup in two spatial dimensions (2D). We consider a reference state $\sigma$ on a 2D surface $M$. Unless stated otherwise, we consider a reference state on the sphere $\mathbf{S}^2$, letting $M=\mathbf{S}^2$. (We used bold font to emphasize that a reference state is defined on the sphere.) In this case, $\sigma= |\psi\rangle \langle \psi|$.\footnote{Starting with a ball or a sphere is equivalent due to the ``completion trick" \cite{knots-paper}. We shall take $\mathbf{S}^2$ to be explicit. The important thing is that we do not need a state on a topologically nontrivial manifold, such as a torus.} 
We assume that the total Hilbert space is a tensor product of on-site Hilbert spaces from sites (or cells) of a coarse-grained lattice. The local Hilbert space for each coarse-grained site is finite-dimensional. 

The axioms {\bf A0} and {\bf A1} are imposed on small balls containing a few coarse-grained lattice sites. The regions are described in Fig.~\ref{fig:figure8-intro}, where
\begin{equation}\label{eq:Delta-def}
    \Delta(B,C,D) \equiv  S_{BC} + S_{CD} - S_B - S_D ,\quad \textrm{and} \quad  \Delta(B,C) \equiv \Delta(B,C,\emptyset).
\end{equation}
Here $S_A\vert_\rho$ means $S(\rho_A) = - \Tr(\rho_A \ln \rho_A)$ is the von Neumann entropy. The strong subadditivity \cite{Lieb1973} implies useful inequalities of similar (but diverse) forms. A particularly useful one to keep in mind is that, for any $\rho_{ABCD}$, $\Delta(B,C,D)_{\rho} \ge I(A:C|B)_{\rho}\ge 0$. Here $I(A:C|B) \equiv S_{AB} + S_{BC} - S_B - S_{ABC}$ is the conditional mutual information. Whenever the labels of regions are listed together, we mean the union of disjoint regions; for instance, $AB$ refers to the union of non-overlapping regions $A$ and $B$, where $A$ and $B$ do not share any qudits.

\begin{remark}
    The axioms {\bf A0} and {\bf A1} stated that the entropy conditions hold precisely. This can be a pretty strong requirement even though there are classes of solvable models that satisfy these conditions precisely \cite{Levin2005}. Nevertheless, there is evidence that precisely satisfied axioms are too strong to accommodate chiral topological orders, at least when the local Hilbert space on each site is finite-dimensional.\footnote{ One piece of evidence comes from \cite{Parent-Hamiltonian2024},  
    where a commuting projector parent Hamiltonian is identified. For a different argument that is based entirely on the quantum state, see~\cite{Li2024IMF}.}
    From a physical standpoint, the right way to appreciate these axioms is that we expect them to hold more accurately in larger systems. The first appearance of entropy in physics is in thermodynamics, where the prediction works best for large systems. We draw the parallel and argue that entanglement bootstrap predictions are also relevant for systems with approximate axioms. In fact, for chiral phases, a certain formula initially argued used precise area law~\cite{Kim2021} found confirmation with another theoretical approach~\cite{Fan2022} in the range of validity of the latter.
\end{remark}

The axioms are imposed on an intermediate length scale, which is much smaller than the total system size and larger than the correlation length and (microscopic) lattice spacing. This can be viewed as one instance of starting from the ``middle" \cite{Goldenfeld2023} rather than the ``bottom"~\cite{feynman1959plenty}, where entanglement bootstrap may be considered as one of the most quantum cases. Indeed, suppose the two axioms are satisfied on some intermediate scale. In that case, the same conditions hold on all larger length scales, and they define a renormalization group fixed point in this sense.  

\begin{remark}
    It is a conjecture that if the violations of the axioms are small enough, coarse-graining the lattice further will result in more accurate axioms. In other words, there may be a threshold of violation of the axioms, below which we expect the error of the axioms to decay with the length scale. See Chapter~11 of~\cite{ShiThesis} for a version of the conjecture and an explanation of why a threshold $0<\delta <\ln 2$ of violation of {\bf A1} is consistent with known examples from non-Abelian anyons, domain walls, defects, and spurious topological entanglement entropy.  
\end{remark}

\subsection{Information convex set} 

The information convex set is a central concept that is useful for understanding the main results of the paper. We will first recall its definition. Despite the effort given to the precise definition, we emphasize that knowing the precise definition is not required to understand the main result of the paper. An alternative route to read the rest of the paper is to treat the information convex set as a ``black box'', whose property is given by the isomorphism theorem and structure theorems we discuss. Readers taking this alternative route could still understand the main technical innovation without much loss. 

We say two density matrices $\rho_{AB}$ and $\lambda_{AB}$ are indistinguishable on subsystem $A$ if their reduced density matrices are identical on $A$, namely $\Tr_B \rho_{AB} =\Tr_B \lambda_{AB}$.
Alternatively, we say 
\begin{equation}
    \rho_{AB} \overset{c}{=\joinrel=} \lambda_A,
\end{equation}
which reads ``the states $\rho_{AB}$ is consistent with $\lambda_A$,'' where $\lambda_A=\Tr_B\lambda_{AB}$ is the reduced density matrix.  

The essence of information convex set ($\Sigma(\Omega)$) is a set of density matrices on region $\Omega$, such that any element is indistinguishable from the reference state $\sigma$ (introduced in section~\ref{subsec:axioms}) on a set of local regions.
The formal definition also involves the ability to thicken the region $\Omega$, and thus it is a bit subtle. In fact, there are several alternative definitions, but we choose not to discuss all of them.

We first state the definition of the information convex set for an embedded region $\Omega$ (Definition~\ref{def:ICS-embedded}). For the definition to be general, we denote the space manifold of the physical system as $M$. When the region $\Omega$ is embedded, we write  $\Omega \hookrightarrow M$. We say an embedded region is a subsystem. We then discuss the more general definition that is applicable to immersed regions (Definition~\ref{def:ICS-immersed}) following a formal definition of immersed region (Definition~\ref{def:immersed_region}).

\begin{definition}[Information convex set for embedded region \cite{shi2020fusion}]\label{def:ICS-embedded}
	For an embedded region $\Omega$, which can be thickened to $\Omega_+$, the information convex set $\Sigma(\Omega)$, for a given reference state $\sigma$, is the set of density matrices 
	\begin{equation}
		\Sigma(\Omega) \equiv \{ \rho_{\Omega} \vert\text{ conditions } 1,\,2 \},
	\end{equation}
where the two conditions are
\begin{enumerate}
	\setlength\itemsep{-0.2em}
	\item   $\rho_{\Omega}=\Tr_{\Omega_+ \setminus \Omega} \, \rho_{\Omega_+} $, where $\rho_{\Omega_{+}}$ is a density matrix on $\Omega_+$.
 
    \item  $\rho_{\Omega_+} \overset{c}{=\joinrel=} \sigma_b$ for any (bounded radius) ball $b \subset \Omega_+$. 
\end{enumerate} 
\end{definition}

\begin{figure}
    \centering
    \includegraphics[width=0.67\linewidth]{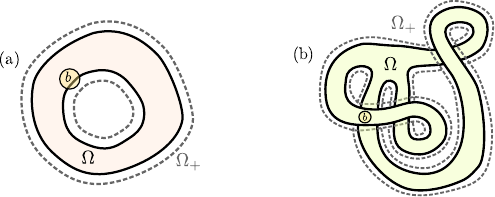}
    \caption{Regions used in the definition of information convex set. (a) $\Omega$ is an annulus embedded in the plane. $\Omega_+ \supset \Omega$ is its thickening. $b\subset \Omega_+$ is a bounded radius disk. (b) An immersed region $\Omega$, which is homeomorphic to a punctured torus.} 
    \label{fig:ICS-definition}
\end{figure}

The embedded regions $\Omega,\Omega_+$ and $b$ that appeared in the definition are illustrated in Fig.~\ref{fig:ICS-definition}(a). A few remarks are in order. 
\begin{enumerate}
    \item An embedded region can be thought of as a pair $\Omega=(\omega,\mathfrak{p})$ where $\omega$ is an abstract topological manifold of the same dimension as $M$ (possibly with boundaries).  $\mathfrak{p}:\omega \to M$ is the embedding map, which maps $\omega$ to $M$, such that the image is homeomorphic to $\omega$ through the map $\mathfrak{p}$. We can thus identify the embedded region with the image, and say $\Omega \subset M$. While this explanation may seem optional (as embedded regions are intuitively well-understood), we find it useful in the comparison with the formal definition of immersed regions below (Definition~\ref{def:immersed_region}). The information of the embedded region is \emph{less} than the embedding map because composing the embedding map with automorphism $\mathfrak{a}$ of $\omega$ (that is $\mathfrak{p}'=\mathfrak{p}\circ\mathfrak{a})$ defines the same embedded region. 
    \item The fact that the disk $b$ can be chosen from balls with a certain bounded radius is an important detail. It enabled the proof of various theorems in previous works~\cite{shi2020fusion,Shi:2020domainwall,knots-paper}. However, omitting the requirement of boundedness of radius still gives a valid alternative definition. 
   \item It follows from this definition that $\Sigma(\Omega)$ must be a compact convex set.
\end{enumerate}

\emph{Immersed regions:} Informally, immersed regions are regions that are locally embedded in the background manifold $M$.
We take the following definition of immersed regions. It is essentially the 2nd half of the definition of immersed regions in~\cite{knots-paper}, with a minor simplification.  

\begin{definition}[Immersed region]\label{def:immersed_region}
An immersed region is a pair
$\Omega=({\omega},\mathfrak{p})$. 
 Here $\omega$  
 be a topological manifold of the same dimension as the physical system $M$. $\mathfrak{p}: \omega\to  M  $ is an continuous map such that, for every point $x\in \omega$, there exists a disk-like neighborhood $n(x)$ such that when we restrict the domain of the map $\mathfrak{p}$ to $n(x)$, it gives an homeomorphism onto the image $b(\mathfrak{p}(x)) = \mathfrak{p} (n(x))\subset M$, where $b(\mathfrak{p}(x))$ is a bounded radius disk. 
  \end{definition}

 We use $\Omega \looparrowright M$ to indicate that $\Omega$ is a region that is immersed in $M$. When we say $\Omega$ is an annulus, we mean the topology of $\omega$ in the pair $\Omega=(\omega,\mathfrak{p})$ is an annulus. We shall refer to the continuous map $\mathfrak{p}$ obeying the requirement in Definition~\ref{def:immersed_region} as the \emph{immersion map}. For our purpose, replacing the immersion map by $\mathfrak{p}'=\mathfrak{p}\circ\mathfrak{a}$ give the same immersed region, where  $\mathfrak{a}$ is an automorphism of $\omega$.  
 
 Immersed regions can be specified by drawing a layered region ($\Omega$) onto the background manifold ($M$), such as the figure-8 annulus $X_8$ in Fig.~\ref{fig:figure8-intro}, and the punctured torus $\Omega$ in Fig.~\ref{fig:ICS-definition}(b). Note that such figures can determine the immersion map up to the automorphism (of $\omega$), which is precisely what we need. Embedded region is a special type of immersed region. 
  The advantage that immersed regions bring us is that we can sometimes use the qudits in the original physical system more than once. We recycle the qudit! 

  The precise meaning of recycling qubits is that we can define a pullback Hilbert space on the immersed region $\Omega=(\omega,\mathfrak{p})$. Suppose the original physical system has the total Hilbert space as a tensor product of onsite Hilbert spaces. We can find a set of sites on $\omega$ as the preimages of the physical sites under $\mathfrak{p}$. The Hilbert space on the immersed region $\Omega$ is defined to be the tensor product of the local Hilbert spaces on these sites. In general, the Hilbert space is larger than that associated with the image $\mathfrak{p}(\omega)\subset M$. The subsets of the image that are covered twice in Fig.~\ref{fig:ICS-definition}(b), each contributes two copies of the qudits. We also have the ability to pull back pieces of the reference state on a neighborhood $n(x)\subset \omega$. We shall denote such pull-back reference state as $\sigma_n^{[\mathfrak{p}]}$. Note, however, that we cannot directly pull back a ``global'' reference state on a nontrivially immersed $\Omega$; this is in contrast with what we can do for embedded regions.

 \emph{Information convex set for immersed regions:} Because the previous definition of an information convex set (Definition~\ref{def:ICS-embedded}) only requires comparing the density matrices on a set of bounded radius balls (neighborhood of points), it is reasonable to extend the definition to immersed regions. Indeed, this is possible.
 
\begin{definition}[Information convex set for immersed region~\cite{knots-paper}]\label{def:ICS-immersed}
	For an immersed region $\Omega \looparrowright M$, which can be thickened into immersed region $\Omega_+=(\omega_+, \mathfrak{p})$, the information convex set $\Sigma(\Omega)$, for a given reference state $\sigma$, is the set of density matrices 
	\begin{equation}
		\Sigma(\Omega) \equiv \{ \rho_{\Omega} \vert\text{ conditions } 1,\,2 \},
	\end{equation}
where the two conditions are
\begin{enumerate}
	\setlength\itemsep{-0.2em}
	\item   $\rho_{\Omega}=\Tr_{\Omega_+ \setminus \Omega} \, \rho_{\Omega_+} $, where $\rho_{\Omega_{+}}$ is a density matrix on $\Omega_+$.
 
    \item  $\rho_{\Omega_+} \overset{c}{=\joinrel=} \sigma^{[\mathfrak{p}]}_n$, for any disk-like neighborhood $n\subset \omega_+$ with pulled back reference state $\sigma^{[\mathfrak{p}]}_n$.  
\end{enumerate}  
\end{definition}
In practice, we may use the drawings of $\Omega \looparrowright M$ and $b\subset M$ [as in Fig.~\ref{fig:ICS-definition}(b)] to describe the second condition, i.e. the local consistency condition. Such a description omits the explicit mentioning of neighborhood $n$ and the immersion map $\mathfrak{p}$; this does require pointing to the layer in the drawing of $\Omega$ that $b$ talks to.

\subsection{Isomorphism, immersed annuli and its classification}\label{sec:immersed-annuli}

As explained in the previous section, immersed regions are regions locally embedded in the background physical system.    
For an immersed region $\Omega$, an information convex set $\Sigma(\Omega)$ can be defined (Definition~\ref{def:ICS-immersed}). 
Embedded regions, also called ``subsystems," are special cases of immersed regions. We shall denote ``$\Omega$ is immersed in $M$" by $\Omega \looparrowright M$. We are mainly interested in the case where the background physical system $M$ is the sphere.

We say two embedded regions $\Omega$ and $\Omega'$ are related by an \emph{elementary step} \cite{shi2020fusion} if we can partition the two regions as $\Omega=AB$ and $\Omega'=ABC$ in a relation shown in Fig.~\ref{fig:elementary-step}. Here $BCD$ is a partition of a bounded radius disk that we impose axiom {\bf A1}. Here, more precisely, we call $\Omega \to \Omega'$ an elementary step of extension and call $\Omega' \to \Omega$ an elementary step of restriction. The definition generalizes when the regions $\Omega$ and $\Omega'$ are immersed.

\begin{figure}[h]
    \centering
    \includegraphics[width=0.28\linewidth]{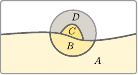}
    \caption{An illustration related to elementary steps. $A$ can be large and has an arbitrary topology, and it can be immersed. $BCD$ is a partition of a bounded radius (embedded) disk that we impose axiom {\bf A1}.}
    \label{fig:elementary-step}
\end{figure}

We say two immersed regions $\Omega$ and $\Omega'$ ($\Omega,\Omega'\looparrowright Y$) can be \emph{smoothly deformed} to each other on the background $Y$, when $\Omega$ can be deformed to $\Omega'$ by a finite sequence of elementary steps (of extensions and restrictions), each of these intermediate configurations is immersed in $Y$. We denote this situation as  $\Omega \overset{Y}{\sim} \Omega'$.
Note that $\overset{Y}{\sim}$ is an equivalence relation,\footnote{This equivalence is closely related to the regular homotopy between two immersion maps, considered in differential topology. If two immersed regions $\Omega=(\omega,\mathfrak{p})$ and $\Omega'=(\omega,\mathfrak{p}')$ are related by $\sim$, the two immersion maps $\mathfrak{p}$ and $\mathfrak{p}'$ can be chosen to be regular homotopic. However, we should modulo out automorphisms of $\omega$, which makes this different from a commonly studied object in math literature. However, this notion is similar to the notion of ``immersed surface" appeared in \cite{pinkall1985regular}.} because $\Omega \overset{Y}{\sim} \Omega$, and $\Omega \overset{Y}{\sim} \Omega'$ if and only if $\Omega' \overset{Y}{\sim} \Omega$.

There is an important theorem, called the \emph{isomorphism theorem}~\cite{shi2020fusion,knots-paper}, in entanglement bootstrap, under axioms {\bf A0} and {\bf A1}. The isomorphism theorem says for $\Omega,\Omega' \looparrowright \mathbf{S}^2$,  
\begin{equation}
    \Omega \overset{\mathbf{S}^2}{\sim} \Omega' \quad \Rightarrow \quad  \Sigma(\Omega) \cong \Sigma(\Omega').
\end{equation}
By isomorphism ``$\cong$", we mean the existence of an quantum channel $\Phi: \Sigma(\Omega) \to \Sigma(\Omega')$, which can be reversed by another channel $\widetilde{\Phi}: \Sigma(\Omega') \to \Sigma(\Omega)$. Such channels are completely determined by the sequence of elementary steps. As an isomorphism, some nontrivial structures of information convex sets are preserved, following from the existence of such channels:
\begin{itemize}
    \item [(i)] The geometry of the convex set is preserved. Namely, the extreme points of $\Sigma(\Omega)$ and $\Sigma(\Omega')$ are mapped to each other, and the map commutes with the convex combination:
    \begin{equation}
        \Phi(p \rho + (1-p)\lambda) = p\, \Phi(\rho) +(1-p) \Phi(\lambda), \quad\quad  \forall \rho,\lambda \in \Sigma(\Omega), \quad p\in [0,1].
    \end{equation}
    \item [(ii)] The distance measure between any two elements is preserved. For instance, the trace distance $D(\cdot, \cdot)$ and the fidelity $F(\cdot,\cdot)$ satisfy:
    \begin{equation}
        D(\rho,\lambda)= D(\Phi(\rho),\Phi(\lambda)),\quad F(\rho,\lambda)= F(\Phi(\rho),\Phi(\lambda)), \quad \forall \rho,\lambda \in \Sigma(\Omega).
    \end{equation}
    \item [(iii)]  The von Neumann entropy difference between any two elements is preserved:
    \begin{equation}
        S(\rho)-S(\lambda) = S(\Phi(\rho))- S(\Phi(\lambda)),\quad \forall \rho,\lambda \in \Sigma(\Omega).
    \end{equation}
\end{itemize}

\emph{Classification of immersed annuli:} As the isomorphism theorem suggests, what is interesting is the topological classification of immersed regions. Below, we shall be interested in immersed annuli.
We explain below that there are only two topological classes of immersed annuli, denoted by $X$ and $X_8$; see Table~\ref{tab:figure-8}.  (We adopt two alternative ways to draw $X_8$. One is as in Fig.~\ref{fig:figure8-intro}, where the overlap part is shown as two layers, one on top and covering another layer. The second way is as Table~\ref{tab:figure-8}, where the overlap part is transparent so that we see both layers. We note that the ``top and bottom" in the first way is introduced for drawing purposes only, and this is not part of the data to specify immersion. We shall frequently use the first approach as it simplifies drawing.) 
We shall be interested in knowing if $\Sigma(X) \cong \Sigma(X_8)$, which is not answered by the isomorphism theorem.

\begin{table}[h]
  \centering
  \begin{tabular}{|c|c|c|c|}
   \hline
    region's name & embedded annulus & figure-8 \\
    \hline
    symbol & $X$ & $X_8$  \\
    \hline
      \multirow{3}{*}{\centering region}
   &   \multirow{3}{*}{\raisebox{-0.08\height}
    {\includegraphics[width=1.45cm]{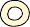}}} &   \multirow{3}{*}{\raisebox{-0.08\height}{\includegraphics[width=1.56cm]{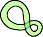}}  }
    \\
    &&\\
    &&\\
    \hline
    turning number & $ 1$ & $ 0 $  \\
    \hline
    superselection sectors & $\calC=\{1,a,b, \cdots \}$ &  $\calC_8=\{ \mu, \nu, \lambda, \cdots  \}$  \\
    \hline
  \end{tabular}
  \caption{The two topological classes of immersion of annuli in a sphere. One is represented by the embedded annulus $X$, and another is represented by the nontrivially immersed ``figure-8" $X_8$. Labels of superselection sectors, in the last line, follow from the simplex theorem reviewed in Section~\ref{sec:structure-theorem}.}\label{tab:figure-8}
\end{table}

According to Table~\ref{tab:figure-8}, figure-8 $X_8$ is the only nontrivial immersion type on a sphere. This fact follows from the following argument.
First, on the disk ($B^2$), each immersed annulus is the thickening of some curve in $B^2$, where the curve has no sharp corners. Therefore, they are classified by the turning number (modulo the sign). Two regions with different turning numbers cannot be converted to each other, as is well-known \cite{outside-in}. Some allowed and not allowed deformations on $B^2$ are illustrated as follows:
\begin{equation}
  \includegraphics[width =0.9 \textwidth]{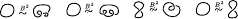}
\end{equation}
Here, each thick curve represents an immersed annulus, $\overset{B^2}{\sim}$ means two immersed regions can be smoothly deformed to each other on the background $B^2$, whereas ${\not\sim}$ means it is impossible to do such a deformation.  
On the sphere, there is an extra deformation that lets the annulus sweep over the entire sphere! This process can change the turning number by an even integer and thus further reduces the classification:
\begin{equation}
  \includegraphics[width =0.69 \textwidth]{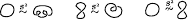}
\end{equation}
Consequently, we end up with two equivalence classes of immersed annuli. These two classes are represented by the embedded annulus $X$ and the immersed figure-8 annulus $X_8$, as described in Table~\ref{tab:figure-8}.

The central nontrivial claim of the isomorphism theorem is the \emph{existence} and uniqueness of a density matrix on another region that shares some property of an existing density matrix. This comes from a powerful ``merging technique", which enters the proof of the isomorphism theorem. The merging technique works even if we modify the topology of the region. In a later section, we shall review the merging technique when we use it in important ways.

\subsection{Superselection sectors and fusion spaces}\label{sec:structure-theorem}

Information convex sets detect superselection sectors and fusion spaces. In 2D, an embedded annulus $X$ detects the anyon types $\calC = \{1,a,b,\cdots\}$, and here $1$ denotes the vacuum sector.  The 2-hole disk (Fig.~\ref{fig:fusion-space} (a)) detects the fusion spaces. These follow from the structure theorems: the simplex theorem and the Hilbert space theorem in the original paper~\cite{shi2020fusion}.
The simplex theorem states that the information convex set of the embedded annulus, $\Sigma(X)$, is a simplex:
\begin{equation}
    \Sigma (X) = \left\{ \rho_X \left| \rho_X= \sum_{a\in \calC} p_a \, \rho^a_X \right.\right\},
\end{equation}
where the extreme points $\rho^a_X$ (which are density matrices\footnote{These extreme points are, in general, mixed states.}) are mutually orthogonal, that is, the fidelity between $\rho^a_X$ and $\rho^b_X$ is zero when $a \ne b$. $\{ p_a \}$ is a probability distribution, and the set of extreme point labels $\calC=\{1, a, b, \cdots\}$ is finite.

\begin{remark}
    More generally, the simplex theorem works for any sectorizable region~\cite{Shi:2020domainwall}. Sectorizable is a topological condition that says $S$ contains disjoint subsets $S_1$ and $S_2$, each of which can deform back to $S$ by a sequence of elementary steps of extensions. (The definition of ``elementary step'' is reviewed at the beginning of Section~\ref{sec:immersed-annuli}.) In 2D, immersed annuli and disks are the only sectorizable regions. A M\"obius strip embedded in $RP^2$ is not sectorizable, even though it is the thickening of some loop.
\end{remark}

Of importance is the fact that the simplex theorem works equally well for immersed regions, e.g., figure-8 $X_8$. The vacuum state is, however, defined \emph{only} for embedded annulus:
\begin{itemize}
    \item $\calC=\{1, a, b, \cdots \}$ is the superselection sectors defined from embedded annulus $X$. The vacuum ``1" is associated with a special extreme point, the reference state $\rho^1_X \equiv \sigma_X$.
    \item $\calC_8 =\{ \mu, \nu, \lambda, \cdots \}$ is the set of superselection sectors defined from the figure-8 $X_8$. We do not assume the existence of a vacuum.\footnote{The existence of the vacuum state for the figure-8 may be a logical consequence of the axioms. This remains an open problem.} 
\end{itemize}

Here are a few words about antiparticles. $\calC$ has a natural notion of antiparticle, that is, for each $a\in \calC$, there is a unique $\bar{a}\in \calC$ such that $a \times  \bar{a} = 1 + \cdots$; one way to define antiparticle is to deform the annulus $X$ back to itself in a nontrivial way on the sphere.  It is a further interesting question on whether $\calC_8$ has a natural definition of antiparticle, too. It seems natural to consider the following automorphism of $\Sigma(X_8)$:  
\begin{equation}\label{eq:figure-8-anti-attempt}
  \includegraphics[width =0.24 \textwidth]{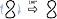}
\end{equation}
A more careful (and broader) discussion is postponed in Appendix~\ref{app:loop-space}.

A general immersed region ($Y$) can detect a fusion space $\mathbb{V}_I(Y)$.
Let $\partial Y$ be the thickened boundary\footnote{In entanglement bootstrap, a thickened (entanglement) boundary $\partial\Omega$ of a region $\Omega$, ($\partial\Omega \subset\Omega$), is the region $\Omega$ with its interior removed.} of $Y$. 
In general, $\partial Y$ can have multiple connected components, and each component is an annulus, either an embedded annulus (like $X$) or a nontrivially immersed annulus (like $X_8$). The label $I$ is then a collection of ordered labels from $\calC$ and $\calC_8$. Two instances are shown in Fig.~\ref{fig:fusion-space}, where $I=(a,b,c)$ for Fig.~\ref{fig:fusion-space}(a) and $I=(\mu,a, \nu)$ for Fig.~\ref{fig:fusion-space}(b). 
Once we specify the label $I$, we get a convex subset $\Sigma_I(Y)$ isomorphic to the state space (i.e., the space of density matrices) of a finite-dimensional fusion space $\mathbb{V}_I$:
\begin{equation}
    \Sigma_I(Y) \cong \calS(\mathbb{V}_I).
\end{equation}
This is the main content of the Hilbert space theorem. Furthermore, the thickened boundary $\partial Y$ is sectorizable, and it is useful to denote the set of superselection sectors as $\calC_{\partial Y}$ and note $I \in \calC_{\partial Y}$. 

\begin{figure}[h]
    \centering
    \includegraphics[width=0.4\textwidth]{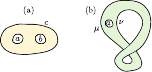}
    \caption{Examples of regions that detect fusion spaces. (a) A region that detects $\mathbb{V}_{ab}^c$, and (b) a region that detects $\mathbb{V}_{\mu a}^{\nu}$. Here, $a,b,c\in \calC$ and $\mu,\nu \in \calC_8$.}
    \label{fig:fusion-space}
\end{figure}

\subsection{Quantum dimensions}

We give a definition of quantum dimension for superselection sectors detected by immersed annuli.   
(A generalization into arbitrary spatial dimensions is presented in Appendix~\ref{app:equivalent-dim}.) Formally, the definition is as follows:

\begin{definition}[Quantum dimension] \label{def:d_h-2D}
Let $Z$ be an immersed annulus. Let $Z=BCD$, where $BD = \partial Z$ is the thickened boundary and $C=Z \setminus \partial Z$ is the interior; each of $B$ and $D$ is a union of two disks, as illustrated in Fig.~\ref{fig:quantum-dim-def}. For any extreme point $\rho^h_Z \in \Sigma(Z)$, 
\begin{equation}\label{eq:quantum-dim}
    d_h \equiv  \exp{\left( \frac{\Delta(B,C,D)_{\rho^h}}{4}\right) }
\end{equation}
is the quantum dimension of the superselection sector $h$.
\end{definition}

\begin{figure}
    \centering
 \includegraphics[width=0.5\textwidth]{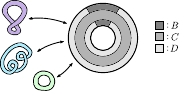}
    \caption{The partition of an immersed annulus $Z=BCD$ used in defining quantum dimension (Definition~\ref{def:d_h-2D}). We first map the immersed annulus to a topological annulus by a homeomorphism indicated by the arrow. We then partition the topological annulus as illustrated.}
    \label{fig:quantum-dim-def}
\end{figure}

Immediately following this definition, we have
\begin{enumerate}
    \item $d_h \ge 1$ for any $h \in \calC_Z$, where $\calC_Z$ is the set of superselection sectors characterized by $\Sigma(Z)$. This follows immediately from strong subadditivity $\Delta(B,C,D) \ge 0$. Note that $\calC_Z = \calC$ if $Z \overset{\mathbf{S}^2}{\sim} X$ and $\calC_Z = \calC_8$ if  $Z \overset{\mathbf{S}^2}{\sim} X_8$.
    \item Importantly, Definition~\ref{def:d_h-2D} does not require the knowledge (or existence) of a vacuum state on $Z$.
    On the other hand, for the embedded annulus $X$, where the vacuum is defined, Definition~\ref{def:d_h-2D} is equivalent to another definition $d_a \equiv \exp{(\frac{S(\rho^a_X) - S(\rho^1_X)}{2})}$.
    \item Definition~\ref{def:d_h-2D} works even if we relax the topology of the region that defines the reference state. This general setup allows  
    topological defects and is relevant to some discussion later in Section~\ref{sec:transportation} and in Appendix~\ref{app:loop-space}. 
\end{enumerate}

\begin{definition}[Abelian sector]
A superselection sector $h$ of an immersed annulus is Abelian if $d_h$ in Definition~\ref{def:d_h-2D} gives $d_h=1$.
\end{definition}

\begin{definition}[Abelian anyon theory]
    In the context of entanglement bootstrap in 2D, we say the theory is Abelian if $d_a =1, \forall a \in \calC$.
\end{definition} 

Note that the definition of Abelian anyon theory only puts a requirement on the embedded annulus.
For various reasons, we will need the following lemma.

\begin{lemma}[Abelian sector criterion]\label{lemma:d-TFRE}
 Let $\rho_Z \in \Sigma(Z)$, where $Z$ is an immersed annulus, the following are equivalent:
    \begin{itemize}
         \item [0.] $\rho_Z$ is an extreme point of $\Sigma(Z)$ associated with an Abelian sector.
        \item [1.] $\Delta(B,C,D)_{\rho} =0$ for the partition in Fig.~\ref{fig:quantum-dim-def}.
        \item [2.] $\Delta(B,C,D)_{\rho} =0$ for the partition in Fig.~\ref{fig:left-middle-right}(a).
        \item [3.] $\Delta(B,C,D)_{\rho} =0$ for the partition in Fig.~\ref{fig:left-middle-right}(b).
        \item [4.] $I(A:C|B)_{\rho} =0$ for the partition in Fig.~\ref{fig:left-middle-right}(c).
    \end{itemize}
\end{lemma}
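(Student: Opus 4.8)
The plan is to anchor the whole chain to the equivalence $0\Leftrightarrow 1$, which is little more than the definition of $d_h$ together with the simplex theorem, and then to reduce items 2, 3, 4 to item 1 using only strong subadditivity and deformation arguments. For $0\Leftrightarrow 1$ I would write $\rho_Z=\sum_{h\in\calC_Z}p_h\,\rho^h_Z$ by the simplex theorem, with the $\rho^h_Z$ mutually orthogonal. In the partition of Fig.~\ref{fig:quantum-dim-def}, $B$ and $D$ each sit inside a topologically trivial subregion of $Z$ whose information convex set is the single point $\{\sigma\}$, so $S_B$ and $S_D$ are the same for every element of $\Sigma(Z)$. On the other hand $BC=Z\setminus D$ and $CD=Z\setminus B$ are annuli that extend back to $Z$, so by the isomorphism theorem the restriction maps $\Sigma(Z)\to\Sigma(BC)$ and $\Sigma(Z)\to\Sigma(CD)$ are isomorphisms; in particular they preserve fidelity, the families $\{\rho^h_{BC}\}$ and $\{\rho^h_{CD}\}$ remain mutually orthogonal, and the mixing formula for von Neumann entropy gives $S_{BC}(\rho_Z)=\sum_h p_h S_{BC}(\rho^h_Z)+H(\{p_h\})$ and likewise for $CD$. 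Hence
\[
\Delta(B,C,D)_{\rho_Z}=\sum_h p_h\,\Delta(B,C,D)_{\rho^h_Z}+2H(\{p_h\})=4\sum_h p_h\ln d_h+2H(\{p_h\}),
\]
a sum of non-negative terms ($d_h\ge 1$, $H\ge 0$); it vanishes precisely when $H(\{p_h\})=0$ (so $\rho_Z$ is extreme) and the unique sector present has $d_h=1$ (so it is Abelian), which is exactly item 0.

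For $1\Leftrightarrow 2\Leftrightarrow 3$ the observation is that the computation above used only that the partition is of the standard type underlying Definition~\ref{def:d_h-2D} — $B,D$ disk-like pieces carved from the thickened boundary, $C$ a sector-carrying annular core. The partitions of Fig.~\ref{fig:left-middle-right}(a),(b) are of this type, and one can carry them into one another and into the partition of Fig.~\ref{fig:quantum-dim-def} by deformations that sweep only through disk-like regions, along which every entropy entering $\Delta(B,C,D)$ is unchanged. So all three conditions express the same equality $4\sum_h p_h\ln d_h+2H(\{p_h\})=0$; even if one of them merges a disk-like boundary piece differently, the right-hand side is still $4\sum_h p_h\ln d_h$ plus a strictly positive multiple of $H(\{p_h\})$, and the conclusion is the same.

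For item 4 one direction is cheap: taking the partition of Fig.~\ref{fig:left-middle-right}(c) to be a coarsening of that of Fig.~\ref{fig:left-middle-right}(b) (or passing to a common refinement of the two), the inequality $\Delta(B,C,D)_\rho\ge I(A:C|B)_\rho\ge 0$ quoted in the text gives $3\Rightarrow 4$. For the converse $4\Rightarrow 1$ I would first decompose again: on $\rho_Z=\sum_h p_h\rho^h_Z$ the simplex structure reduces $I(A:C|B)_{\rho_Z}$ to $\sum_h p_h\,I(A:C|B)_{\rho^h_Z}$ plus a non-negative mixing term that vanishes only when $H(\{p_h\})=0$, so item 4 already forces $\rho_Z$ to be extreme with $I(A:C|B)_{\rho^h_Z}=0$. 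It then remains to show that vanishing of this CMI on an extreme point of an annular information convex set forces $d_h=1$; the natural route is to use the equality case of strong subadditivity (a recovery map supported on $B$) to identify $I(A:C|B)_{\rho^h_Z}$ with $\Delta(B',C',D')_{\rho^h_Z}$ for a standard partition, hence with $4\ln d_h$, and failing that to run the merging technique: $I(A:C|B)=0$ makes $\rho_{ABC}$ a quantum Markov chain, which together with $\sigma$-consistency on disks reconstructs $\rho_Z$ from strictly local data, and such a state can only occupy an Abelian sector.

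The hard part is this last implication $4\Rightarrow 1$, i.e.\ propagating the purely local condition $I(A:C|B)_\rho=0$ back to the global statement $d_h=1$. This requires pinning down the geometry of Fig.~\ref{fig:left-middle-right}(c) precisely enough to see that its CMI equals a positive multiple of $\ln d_h$ on extreme points and carries the correct mixing term on general elements, and it requires the merging reconstruction to be made watertight — in particular to go through verbatim for the immersed annulus $X_8$, where the sectors $\calC_8$ exist but there is no vacuum $\rho^1$ to anchor the argument. Everything else is routine bookkeeping with the simplex decomposition and strong subadditivity.
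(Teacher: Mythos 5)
Your treatment of the extreme/non-extreme dichotomy is essentially the paper's: decompose $\rho_Z=\sum_h p_h\rho^h_Z$ by the simplex theorem and observe that a non-extreme point adds a strictly positive mixing entropy to each of the quantities in conditions 1--4, so that these conditions force extremality, and on an extreme point condition 1 is just $4\ln d_h=0$. That part is fine (your factor $2H(\{p_h\})$ versus the paper's $\sum_h p_h\ln(1/p_h)$ is immaterial, since only strict positivity is used).

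The genuine gap is the equivalence of conditions 1--4 \emph{on extreme points}, which is the technical heart of the lemma and which the paper delegates entirely to Lemma~\ref{lemma:quantum-dim-TFRE} in Appendix~\ref{app:equivalent-dim}. Your argument for $1\Leftrightarrow 2\Leftrightarrow 3$ --- that the partitions of Fig.~\ref{fig:quantum-dim-def} and Fig.~\ref{fig:left-middle-right}(a),(b) ``can be carried into one another by deformations that sweep only through disk-like regions, along which every entropy entering $\Delta(B,C,D)$ is unchanged'' --- does not work: these are genuinely different partitions of $Z$, not deformations of one another, and on an extreme point they do \emph{not} ``express the same equality'': the symmetric partition of Fig.~\ref{fig:quantum-dim-def} evaluates to $4\ln d_h$ while the left-middle-right partitions of Fig.~\ref{fig:left-middle-right}(a),(b) and the conditional mutual information of (c) each evaluate to $2\ln d_h$ (this is exactly items (a)--(d) of Lemma~\ref{lemma:quantum-dim-TFRE}). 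They agree only in the Abelian case, which is what has to be \emph{proved}. The actual proof is a chain of exact entropy identities driven by two inputs you never invoke: the extreme point criterion $\Delta(\partial Z, Z\setminus\partial Z)_{\rho^h}=0$ for sectorizable regions, and the factorization of extreme points into tensor products across disjoint shells; these yield, e.g., $\Delta(B,C,DE)_{\rho^h}=I(A:C|B)_{\rho^h}$ and $\Delta(B,CD'E,D)_{\rho^h}=\Delta(B,C,DD')_{\rho^h}=\Delta_L+\Delta_R$, which is how the factor of $2$ between (c) and (a),(b) arises and how $4\Rightarrow 1$ is closed. You explicitly concede $4\Rightarrow 1$ and offer only candidate strategies (recovery maps, merging); as written, the proposal establishes $0\Leftrightarrow 1$ and the easy SSA direction $3\Rightarrow 4$, but not the full equivalence. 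To repair it, prove (or import) Lemma~\ref{lemma:quantum-dim-TFRE} rather than appealing to entropy-preserving deformations between inequivalent partitions.
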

According to Lemma~\ref{lemma:d-TFRE}, any one of the conditions 1, 2, 3, and 4 can be used to determine if a state in $\Sigma(Z)$ is an extreme point associated with an Abelian superselection sector.

\begin{figure}[h]
    \centering
    \includegraphics[width=0.67 \textwidth]{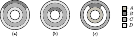}
    \caption{ As with Fig.~\ref{fig:quantum-dim-def}, we mapped the immersed annulus to the topological annulus by some (fixed) homeomorphism. Three distinct partitions of the annulus are shown in (a), (b), and (c).}\label{fig:left-middle-right}
\end{figure}

\begin{proof}
First, suppose that $\rho_Z$ is an extreme point of $\Sigma(Z)$. According to the definition of quantum dimension, $\Delta(B,C,D)_{\rho} =0$ for the partition in Fig.~\ref{fig:quantum-dim-def}.
It follows from Lemma~\ref{lemma:quantum-dim-TFRE} in Appendix~\ref{app:equivalent-dim} that any one of conditions 1, 2, 3, 4 implies the rest of them. (Lemma~\ref{lemma:quantum-dim-TFRE} in the appendix is a more general statement about the equivalence of definitions for quantum dimension, works beyond 2D setups.)  
Second, none of 1, 2, 3, 4 is true if $\rho_Z = \sum_h p_h \, \rho^h_Z$ is a non-extreme point.  
This is because
a non-extreme point has contribution $\sum_h p_h \ln \left(\frac{1}{p_h}\right)>0$ added to $\Delta$s  and $I$ in conditions 1, 2, 3, 4. This contribution is always positive for non-extreme points.  
From these two considerations, we derive Lemma~\ref{lemma:d-TFRE}.
\end{proof}

\section{Fun with figure-8}\label{sec:fun}

We have learned that the ``figure-8" region $X_8$ represents the only nontrivial class of immersed annulus in the sphere (Table~\ref{tab:figure-8}). It is arguably the simplest nontrivial immersed region and is fun to play with. The set of superselection sectors detected by $X_8$ are $\calC_8= \{ \mu,\nu,\cdots \}$. Throughout this section, we assume a reference state $\sigma = |\psi\rangle \langle \psi|$ on a sphere $\mathbf{S}^2$, and the axioms {\bf A0} and {\bf A1} are satisfied everywhere on the reference state. Also, recall that $X$ represents the embedded annulus, which can detect the anyon types $\calC=\{1, a, b, \cdots\}$.

The seemingly simple figure-8 region ($X_8$) already has interesting puzzles. The simplest one is on the existence of an Abelian sector in $\calC_8$: 

\begin{Conjecture}[Abelian sector on figure-8]\label{conj:1}
$\exists \mu \in \calC_8$, such that $d_\mu =1$. 
\end{Conjecture}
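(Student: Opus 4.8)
The plan is to construct an explicit candidate extreme point of $\Sigma(X_8)$ and show it is Abelian by verifying one of the equivalent conditions of Lemma~\ref{lemma:d-TFRE} (say condition 4, vanishing of a conditional mutual information for the partition in Fig.~\ref{fig:left-middle-right}(c)). The natural candidate comes from the merging technique: take the vacuum extreme point $\sigma_X = \rho^1_X$ on the embedded annulus $X$ and "fold" it onto $X_8$. Concretely, view $X_8$ as two embedded annuli glued along the overlap (the doubly-covered strip); one should try to merge two copies of states from $\Sigma(X)$ — or more precisely a state from $\Sigma$ of a once-punctured region — along the overlap region, using the fact that merging is controlled by the entropy conditions {\bf A0}, {\bf A1} and strong subadditivity. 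The first step is therefore to set up the correct decomposition of $X_8$ into pieces each of which is a ball or an embedded annulus, identify the "gluing" balls, and invoke the merging lemma to produce a well-defined $\rho_{X_8}\in\Sigma(X_8)$.

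Second, I would analyze which sector of $\calC_8$ this merged state lands in and show it is extreme: extremality should follow because the building blocks are extreme points (the vacuum on $X$) and merging along balls preserves the relevant saturation of strong subadditivity, exactly as in the structure theorems of \cite{shi2020fusion}. Third — the crux — I would compute $\Delta(B,C,D)$ or $I(A:C|B)$ for this state on the partition of Fig.~\ref{fig:left-middle-right}, using locality: because the merged state agrees with the reference state on every ball, and because the transportation loop around $X_8$ has turning number $0$, the region $C$ that "goes around" the figure-8 can be chosen so that when it is split by $B$, the two sides $A$ and the complement are decoupled by the Markov property inherited from the reference state on each sheet. The intuition flagged in the introduction — "anyons transported along the figure-8 cannot be permuted" — is the geometric reason this conditional mutual information vanishes: the holonomy is trivial because the loop bounds on each single sheet.

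The main obstacle I expect is the merging step across the doubly-covered overlap: unlike the standard isomorphism-theorem merges, here the two pieces being glued use the \emph{same} physical Hilbert space twice, so one must be careful that the merged density matrix is genuinely a state on the immersed region $X_8$ (i.e. consistent with $\sigma$ on all balls of $X_8$, including balls straddling the self-intersection locus) rather than an ill-defined object. A clean way around this may be to avoid merging altogether and instead argue abstractly: show that $\Sigma(X_8)$ must contain \emph{some} extreme point (it is nonempty and the simplex theorem applies), and then show that the smallest quantum dimension appearing in $\calC_8$ must equal $1$ by a counting/consistency argument — e.g. relating $\sum_{\mu\in\calC_8} d_\mu^2$ to a topological quantity computed two ways, or using that tunneling a figure-8 is trace-preserving. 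Deciding between the explicit-construction route and the abstract-counting route, and making the latter rigorous without assuming the strong isomorphism one is trying to prove, is where the real work lies; in the Abelian-anyon case (Sec.~\ref{sec:Proof-Abelian}) the counting collapses because every $d_a=1$, which is presumably why only that case is fully settled.
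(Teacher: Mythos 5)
First, note what the target actually is: the paper does \emph{not} prove Conjecture~\ref{conj:1} in general --- it is left open, and only the special case of Abelian anyon theories ($d_a=1$ for all $a\in\calC$) is settled, via Proposition~\ref{Prop:Abelian-result}. Your proposal aims at the general statement, and both of the routes you sketch have genuine gaps. For the explicit-construction route: producing \emph{some} element of $\Sigma(X_8)$ by merging is not the hard part (nonemptiness and the simplex theorem for $X_8$ are already known); the hard part is that the state you get this way has no reason to be an extreme point. Your claim that ``extremality should follow because the building blocks are extreme points'' is exactly backwards for regions of nontrivial topology: merging vacuum/extreme states of disks and embedded annuli along balls typically yields the \emph{maximum-entropy} element of the information convex set of the glued region, which is a nontrivial convex combination $\sum_\mu p_\mu\,\rho^\mu$ whenever $|\calC_8|>1$. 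So the merged state generically fails every condition of Lemma~\ref{lemma:d-TFRE} by the amount $\sum_\mu p_\mu\ln(1/p_\mu)>0$, and nothing Abelian has been exhibited. For the abstract-counting route: no sum rule of the form $\sum_{\mu\in\calC_8}d_\mu^2=\sum_{a\in\calC}d_a^2$ is established in the paper, and even granting it together with the known fact $|\calC_8|=|\calC|$, it would not force any $d_\mu=1$ (e.g.\ $\{1,\sqrt3\}$ versus $\{\sqrt2,\sqrt2\}$ have equal cardinality and equal total dimension).

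The intuition you flag --- trivial holonomy because the figure-8 loop does not permute anyons --- is indeed the paper's key input, but the mechanism that converts it into a vanishing conditional mutual information is different from what you describe and is precisely where the Abelian restriction enters. The paper's argument takes an arbitrary extreme point $\rho^\mu_{X_8}$, punches a vacuum hole in the interior, splits the vacuum into a pair $(a,\bar a)$ summed uniformly over $a\in\calC$, transports $a$ almost all the way around the figure-8 (using non-permutation to conclude the label is unchanged), and then re-fuses the pair. The re-fusion lands back in the vacuum \emph{only because} $a\times\bar a=1$ deterministically, i.e.\ only for Abelian $a$; this is what lets one identify the transported state with the original $\rho^\mu_{X_8}$ and extract $I(AE:C|BD)=0$, hence $I(A:E|BCD)=0$ and $d_\mu=1$ by Lemma~\ref{lemma:d-TFRE}. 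For non-Abelian $\calC$ the fusion outcome is not deterministic and the identification fails --- which is why the general statement remains a conjecture. Your proposal does not identify this obstruction, so neither route closes the argument even in the Abelian case without essentially reconstructing the transport-and-refuse computation.
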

We emphasize that the conjecture is made in the context where the reference state is defined on the sphere $\mathbf{S}^2$. The axioms {\bf A0} and {\bf A1} are satisfied everywhere on the sphere. An alternative version of the conjecture requires the reference state $\sigma$ defined on an open disk. These two versions are equivalent due to the ability to ``complete" the disk into a sphere, as commented earlier.  
Note, however, that the conjectured statement is false if we remove the requirement of {\bf A1} centered at two or more points on the sphere. The counterexample is a state with a pair of topological defects.  If topological defects (e.g., that permute $e$ and $m$ in the toric code model, reviewed in Section~\ref{sec:defect-toric}) are present, then an annulus (either embedded or a figure-8) that winds the defect point around must have no Abelian sectors. 
We shall prove Conjecture~\ref{conj:1} for Abelian anyon theory in Section~\ref{sec:Proof-Abelian}.

Another natural question about figure-8 is whether or not
\begin{equation}\label{eq:iso-X-X8}
    \Sigma(X_8) \cong \Sigma(X).
\end{equation} 
The weaker statement $|\calC_8|=|\calC|$, i.e., the number of extreme points of $\Sigma(X_8)$ equals the number of anyon species, is true; we will discuss this in Section~\ref{sec:transportation}.  
If the two information convex sets $\Sigma(X_8)$ and $\Sigma(X)$ are isomorphic in the most natural manner, we expect an invertible map between them that preserves the quantum dimensions. Then, in addition to $|\calC| = |\calC_8|$, we should have an Abelian sector of $\calC_8$ that is mapped to the vacuum $1\in \calC$. This is one way to motivate Conjecture~\ref{conj:1}. (It is a further question whether there is a ``canonical vacuum" on $\calC_8$.)

An intriguing and nontrivial observation, which we shall make later in this section, is that Conjecture~\ref{conj:1} not only implies the seemingly stronger $\Sigma(X_8) \cong \Sigma(X)$ but also implies a very powerful ``strong isomorphism": 
\begin{Conjecture}[2D strong isomorphism conjecture]\label{conj:2D-strong}
\begin{equation}
    \Omega,\Omega' \looparrowright \mathbf{S}^2 \textrm{ are homeomorphic} \quad \Rightarrow \quad  \Sigma(\Omega) \cong \Sigma(\Omega').
\end{equation} 
\end{Conjecture}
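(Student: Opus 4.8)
The plan is to prove Conjecture~\ref{conj:2D-strong} conditionally on Conjecture~\ref{conj:1} (existence of an Abelian sector on the figure-8), via a \emph{figure-8 tunneling} construction; this yields an unconditional theorem for Abelian anyon theories once Conjecture~\ref{conj:1} is established there (Sec.~\ref{sec:Proof-Abelian}), and leaves the general statement open exactly to the extent that Conjecture~\ref{conj:1} is. The first step is a topological reduction. By the isomorphism theorem, $\Sigma(\Omega)$ depends only on the regular-homotopy class of the immersion $\Omega\looparrowright\mathbf{S}^2$, so for a fixed compact surface-with-boundary $M$ one only has to compare the finitely many such classes of immersions $M\looparrowright\mathbf{S}^2$. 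Smale--Hirsch immersion theory identifies these classes with a torsor over $[M,SO(3)]=H^1(M;\mathbb{Z}_2)\cong(\mathbb{Z}_2)^{b_1(M)}$: the $\mathbb{Z}_2$ attached to a loop $\gamma\subset M$ records precisely whether a tubular-neighborhood annulus of $\gamma$ immerses as the embedded annulus $X$ or as the figure-8 $X_8$, with the integer turning numbers of loops reduced mod $2$ by the sphere's global ``sweep" deformation --- consistent with the count of two classes of immersed annuli in Table~\ref{tab:figure-8}. Hence any two homeomorphic immersed regions in $\mathbf{S}^2$ are joined by a finite sequence of smooth deformations and of single-loop moves, each flipping one annular neighborhood between classes $X$ and $X_8$ while leaving the region literally fixed outside a small disk; it suffices to handle one such move.

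For the single-loop move I would construct the isomorphism $\Phi\colon\Sigma(\Omega)\to\Sigma(\Omega')$ by the merging technique of the isomorphism theorem (this is the ``tunneling trick," Sec.~\ref{sec:tunneling-trick}). Arrange $\Omega$ and $\Omega'$ to coincide outside a disk $U$, with $\Omega\cap U$ a piece of an embedded annulus and $\Omega'\cap U$ a figure-8 ``adapter." Given $\rho\in\Sigma(\Omega)$, merge it across $\partial U$ with the extreme point of the information convex set of a standard figure-8 that carries the Abelian sector $\mu$ guaranteed by Conjecture~\ref{conj:1}. Because Abelian means $d_\mu=1$, Lemma~\ref{lemma:d-TFRE} makes all the relevant $\Delta$'s and the conditional mutual information across the gluing region vanish for that extreme point --- precisely the situation in which the merge produces a well-defined state on the new region and is reversible, with the reverse map the analogous merge from $\Omega'$ back to $\Omega$. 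The standard properties of the merging map then give that $\Phi$ is affine on the convex set, invertible, and preserves both entropy differences and fidelities between pairs of elements, i.e. $\Sigma(\Omega)\cong\Sigma(\Omega')$; composing the moves over the finitely many loops proves Conjecture~\ref{conj:2D-strong}. As a byproduct $\Phi$ restricts to a quantum-dimension-preserving bijection $\calC_{\partial\Omega}\leftrightarrow\calC_{\partial\Omega'}$, which for $\Omega=X$, $\Omega'=X_8$ recovers $\Sigma(X)\cong\Sigma(X_8)$ and carries the vacuum $1\in\calC$ to an Abelian element of $\calC_8$ --- a consistency check with, not a replacement for, the input hypothesis.

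The honest main obstacle is the dependence on Conjecture~\ref{conj:1}: we can presently certify an Abelian figure-8 sector only when every anyon detected by $X$ is Abelian, so the unconditional result is confined to Abelian anyon theories. A secondary technical point is \emph{locality} of the merge: the figure-8 adapter sits inside a larger region $\Omega'$ whose other boundary components may carry arbitrary (possibly non-Abelian) sectors, so one must check that inserting an Abelian sector on that one adapter still forces the relevant $\Delta$'s to vanish along $\partial U$. It does --- Abelian-ness is the vanishing of an entropy combination supported near the adapter and is insensitive to the state elsewhere --- but this has to be matched against the precise hypotheses of the merging lemmas. Finally, the classification step (that every regular-homotopy class of $M$ in $\mathbf{S}^2$ is reached by disjoint, small-support single-loop flips, with correct bookkeeping of the sphere's sweep move) is routine differential topology but worth isolating as a lemma so the tunneling construction can be applied one disk at a time.
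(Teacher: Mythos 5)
Your proposal is correct and follows essentially the same route as the paper: a conditional reduction of Conjecture~\ref{conj:2D-strong} to Conjecture~\ref{conj:1}, implemented by tunneling an Abelian figure-8 to flip the twist on one 1-handle (loop) at a time, with reversibility of the merge coming from $d_\mu=1$ via Lemma~\ref{lemma:d-TFRE}. The only cosmetic difference is bookkeeping --- you phrase the topological reduction via Smale--Hirsch and $H^1(M;\mathbb{Z}_2)$, while the paper slides 1-handles of a boundary connected sum into a standard form $\Omega^\star$ and removes twists --- but these encode the same $(\mathbb{Z}_2)^{b_1}$ of loop-wise $X$ versus $X_8$ data.
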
 

Importantly, we only required that $\Omega$ and $\Omega'$ are homeomorphic, and this is strictly weaker than $\Omega \overset{\mathbf{S}^2}{\sim} \Omega'$. (The weaker statement: $\Omega \overset{\mathbf{S}^2}{\sim} \Omega'$ implies $\Sigma(\Omega) \cong \Sigma(\Omega')$ is a known fact, from isomorphism theorem.) This motivates the name ``strong isomorphism conjecture".
As is evident, Eq.~\eqref{eq:iso-X-X8} is implied by the strong isomorphism conjecture. It is straightforward to generalize Conjecture~\ref{conj:2D-strong} to arbitrary spatial dimensions; see Appendix~\ref{sec:SIConj-arbitrary}. As a reminder, $\cong$ means there exists an isomorphism. It does not imply a ``canonical" isomorphism. $\cong$ preserves the entropy difference and fidelity between two elements.

\subsection{Tunneling trick}\label{sec:tunneling-trick}

We introduce a tunneling trick. The essential observation is that tunneling a figure-8 through a strip can add a twist to the strip. If the ``figure-8" region carries an Abelian sector, this process generates an isomorphism between information convex sets associated with homeomorphic topological regions in possibly distinct regular homotopy classes. 
We first explain the tunneling process at a topological level. After that, we discuss the details of entanglement bootstrap and explain what tunneling does to the quantum state (density matrices) and why an Abelian sector in $\calC_8$ plays a role.

 \begin{figure}[h]
     \centering
     \includegraphics[width=0.68\textwidth]{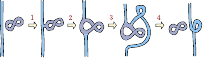}
     \caption{The topological detail of tunneling a figure-8 $X_8$ through a strip. 
     The strip can be thought of as a ``local piece" (in fact, a 1-handle) of some larger immersed region.}
     \label{fig:tunneling-main}
 \end{figure}

 At the topological level, the tunneling process contains a few steps. This is illustrated in Fig.~\ref{fig:tunneling-main}. In the first step, we attach $X_8$ to the strip by a small disk on the right. In the second and third steps, we deform the region such that $X_8$ disappears from the right and appears at the left of the strip, still attached. In the fourth step,  we detach $X_8$. 
 By the end of the tunneling process, a twist is added to the strip. 
It is instructive to compare the tunneling of $X_8$ with the tunneling of an embedded annulus $X$:
   \begin{equation}\label{eq:tunnel-trivial}
     \includegraphics[width=0.60\textwidth]{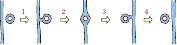}
 \end{equation}
 Tunneling an embedded annulus does not add any twist to the strip. The difference in the turning number between  $X_8$ and $X$ matters, and thus $X_8$ is of importance here.

 What does the tunneling process in Fig.~\ref{fig:tunneling-main} mean in the entanglement bootstrap? It represents a transformation of the quantum state (density matrix) on the region containing the strip. Let the region containing the original strip be $A$ and the region containing the strip after tunneling be $A'$, (see Fig.~\ref{fig:tunneling-EB}). Suppose, the state on ``figure-8" $C$ is an extreme point $\rho^{\mu}_{C}$, with $\mu \in \calC_8$. The transformation, as we shall discuss, must be a quantum channel that depends on $\mu$:
  \begin{equation}\label{eq:tunnel-channel}
     \Phi(\mu): \quad \Sigma(A) \to \Sigma(A').
 \end{equation}  
The quantum channel $\Phi(\mu)$ can be written as a product of three quantum channels 
\begin{equation} \label{eq:tunnel-channel-decompose}
    \Phi(\mu)= \Gamma^{\rm det}\circ\Phi^{\rm deform}_{ABC\to A'B'C'}\circ\Gamma^{\rm att}, 
\end{equation}
where the attaching map $\Gamma^{\rm att}: \Sigma(A) \to \Sigma_{[\rho_C^{\mu}]}(ABC)$ is a merging process, as we explain in the next paragraph. $\Sigma_{[\rho_C^{\mu}]}(ABC)$ is the subset of $\Sigma(ABC)$ with the constraint that the state reduces to $C$ gives $\rho^{\mu}_C$. The deformation map $\Phi^{\rm deform}_{ABC\to A'B'C'}: \Sigma_{[\rho_C^{\mu}]} (ABC) \to \Sigma_{[\rho_{C'}^{\mu}]}(A'B'C')$ is associated with the deformation (steps 2 and 3 of  Fig.~\ref{fig:tunneling-main}). This channel is an isomorphism (by the isomorphism theorem). The detaching map $\Gamma^{\rm det}: \Sigma_{[\rho_{C'}^{\mu}]}(A'B'C') \to \Sigma(A')$ is the partial trace $\Tr_{B'C'}$. In summary, the three quantum channels are 
\begin{equation}
    \begin{aligned}
        \Gamma^{\rm att}: &\quad \Sigma(A) \to \Sigma_{[\rho_{C}^{\mu}]}(ABC), \quad \qquad\qquad  \, \textrm{step 1 of Fig.~\ref{fig:tunneling-main}} \\
        \Phi^{\rm deform}_{ABC\to A'B'C'}: &\quad \Sigma_{[\rho_{C}^{\mu}]} (ABC) \to \Sigma_{[\rho_{C'}^{\mu}]}(A'B'C'), \qquad \textrm{step 2 and 3 of Fig.~\ref{fig:tunneling-main}} \\
        \Gamma^{\rm det}: & \quad  \Sigma_{[\rho_{C'}^{\mu}]}(A'B'C') \to \Sigma(A'), \qquad \qquad \, \textrm{step 4 of Fig.~\ref{fig:tunneling-main}}
    \end{aligned}
\end{equation}
In fact, any one of them is a sequence of Petz maps \cite{Petz2003} and partial traces.

  \begin{figure}[h]
     \centering
     \includegraphics[width=0.78\textwidth]{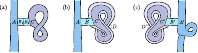}
     \caption{Analyzing the attachment (detachment) of an immersed figure-8 annulus $C$ ($C'$). Only part of $A$, $A'$ is shown. (a) A partition for merging. (b) and (c) have special usage when the state on ``figure-8" is an Abelian extreme point.} 
     \label{fig:tunneling-EB}
 \end{figure}

The attachment map $\Gamma^{\rm att}$ is defined as follows. Consider an arbitrary state $\lambda_A \in \Sigma(A)$ and an arbitrary extreme point $\rho_C^{\mu} \in \Sigma(C)$, with $\mu \in \calC_8$; the partitions are those in Fig.~\ref{fig:tunneling-EB}(a).
   First, we extend $\lambda_A$ to $\lambda_{AB} \in \Sigma(AB)$ and extend $\rho^\mu_C$ to $\rho^\mu_{BC} \in \Sigma(BC)$. Such an extension is unique because we enlarge the regions smoothly. $B=B_1B_2$.
   The two states obey the conditions for merging:
     \begin{equation}
         I(A:B_2|B_1)_{\lambda} =0, \quad  I(C:B_1|B_2)_{\rho^\mu} =0, \quad \Tr_A \,\lambda_{AB} = \Tr_C \, \rho^{\mu}_{BC}.
     \end{equation}
     According to the merging lemma \cite{Kato2016,shi2020fusion}, there is a unique state $\tau_{ABC}$ satisfying (i) $\tau_{ABC}$ reduces to $\lambda_{AB}$ on $AB$ and reduces to $\rho^\mu_{BC}$ on $BC$, (ii) $I(A:C|B)_\tau =0$. Then, by the merging theorem \cite{shi2020fusion,Shi:2020domainwall}, the state $\tau_{ABC}$ is in the information convex set $\Sigma(ABC)$, and in fact it is in the convex subset $\Sigma_{[\rho^{\mu}_C]}(ABC)$.
     The attachment map $\Gamma^{\rm att}$ is defined such that $\tau_{ABC} = \Gamma^{\rm att}(\lambda_A)$. Importantly, $\Gamma^{\rm att}$ is a sequence of Petz maps acting in a local neighborhood of the strip. It depends on $\mu$ through merging, and it is  {independent} of $\lambda_A$.

The discussion of the attachment map so far is for general $\mu \in \calC_8$. If $\mu$ is non-Abelian, the map $\Gamma^{\rm att}:\Sigma(A)\to \Sigma_{[\rho^{\mu}_{C}]}(ABC)$ may not be an isomorphism. 
 (The same is true already for tunneling an embedded annulus in the context of Eq.~\eqref{eq:tunnel-trivial}.)
The important thing, as we shall explain, is that 
 if $\mu \in \calC_8$ is Abelian, $\Gamma^{\rm att}$ and $\Gamma^{\rm det}$ are isomorphisms. 
 The special property of Abelian sectors relevant to $\Gamma^{\rm att}$ is that for any Abelian $\mu \in \calC_8$,
 \begin{equation} \label{eq:alpha}
     I(A:C|B)_{\alpha} =0, \quad \forall \alpha \in \Sigma_{[\rho^{\mu}_C]}(ABC).
 \end{equation}
 The derivation of Eq.~\eqref{eq:alpha} is as follows. We first extend the state $\alpha_{ABC} \in \Sigma_{[\rho^\mu_C]}(ABC)$ to a state in $\Sigma(ABCD)$, as in Fig.~\ref{fig:tunneling-EB}(b).
 If $\mu$ is Abelian, by Lemma~\ref{lemma:d-TFRE}, we have $\Delta(B,C,D)_{\rho^\mu}=0$.  
 Then by strong subadditivity, $I(A:C|B)_{\alpha} \le \Delta(B,C,D)_{\rho^\mu} =0$. 
 Because of its quantum Markov state structure, $\alpha_{ABC}$ can be recovered from $\alpha_{AB}$ by a quantum channel that is independent of $\alpha_{AB}$ (but depends on $\rho^\mu$). The fact that this $\alpha$-independent quantum channel reverses the partial trace $\Tr_C$ implies that $\Gamma^{\rm att}$ is an isomorphism. 
 For the same reason, the detachment map $\Gamma^{\rm det}$ is an isomorphism too. (The state exists by merging. The reversibility follows from $I(A':C'|B')=0$.)

In summary, tunneling a figure-8 region (Fig.~\ref{fig:tunneling-main}) induces a quantum channel of the form Eq.~\eqref{eq:tunnel-channel}. The channel $\Phi(\mu)$ is an isomorphism if $\mu \in \calC_8$ is Abelian.

 As an application, we see that $\Sigma(X_8) \cong \Sigma(X)$ if $\exists \mu \in \calC_8$ such that $d_\mu=1$, (that is if Conjecture~\ref{conj:1} holds). This is because we can tunnel a figure-8 through $X$ and turn $X$ into a figure-8 $X_8$. This result is summarized as follows:
 \begin{equation}\label{eq:tunnel-8-full}
     \includegraphics[width=0.4\textwidth]{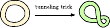}
 \end{equation}
Another way to illustrate this result uses the idea of handle decomposition; see e.g., Chapter~4 of~\cite{gompf1994}. An annulus is the union of a 0-handle with a 1-handle. The difference between the embedded annulus $X$ and the immersed ``figure-8" $X_8$ is the absence or existence of a twist on the 1-handle. We thus redraw \eqref{eq:tunnel-8-full} as: 
 \begin{equation}\label{eq:tunnel-8-handle}
     \includegraphics[width=0.46\textwidth]{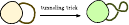}
 \end{equation}
 Here, a 0-handle is a disk (the round part of the figure), and a 1-handle is the strip (thick lines in the figure). Both 0-handles and 1-handles are topological balls. The difference is the way they attach.
 This illustration will be convenient.

\subsection{Abelian sector on ``figure-8" implies 2D strong isomorphism}

Now, we are in the position to explain why if the figure-8 annulus has an Abelian sector (i.e., if Conjecture~\ref{conj:1} is true), then the strong isomorphism conjecture (Conjecture~\ref{conj:2D-strong}) holds. This uses the tunneling trick discussed in Section~\ref{sec:tunneling-trick}. We explain why the trick is general enough.   

In the remainder of the section, we assume the existence of $\mu \in \calC_8$, such that $d_{\mu}=1$. The proof of strong isomorphism of an arbitrary region reduces to the study of connected regions pretty straightforwardly. Thus, we only consider connected regions $\Omega \looparrowright \mathbf{S}^2$ below. 
Because $\Omega$ is connected and is immersed in $\mathbf{S}^2$, it is either the sphere itself or it is an orientable surface with boundaries. Nothing needs to be proved for the case $\Omega$ is the sphere itself, and thus we consider the latter case. An orientable surface with boundaries can be written as a connected sum of $k\ge 0$ tori\footnote{The $k=0$ case, in our convention, is a sphere with $m$ holes.} ($\# k \,T^2$) with $m\ge 1$ holes; see, e.g. \cite{George-note-2011}. 
Thus, such an $\Omega$ is a \emph{boundary connected sum} of $k$ punctured tori $T^2\setminus B^2$ and $m-1$ annuli.

Components participating in the boundary connected sum (i.e., punctured tori and annuli) can be interchanged by sliding one across another; see the first deformation step\footnote{This is an analog of a well-known trick in topology, called handle slides, which is useful to arrange handles into a certain order (canonical form), see e.g., Fig.~1 of \cite{handle-slides-2015}.} of Fig.~\ref{fig:handle-slides}.  
Such a sliding trick works for both embedded regions and the more general immersed regions. After we arrange these components in the desired positions, we can deform them to 1-handles attached to the 0-handle, as shown in the second step in Fig.~\ref{fig:handle-slides}.

\begin{figure}[h]
    \centering
\includegraphics[width=0.79\linewidth]{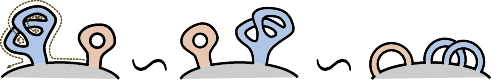}
    \caption{Step 1: We rearrange components in a boundary connected sum by sliding one component across another. The gray piece at the bottom is a large disk (0-handle). The way to slide the annulus (orange) from the right to the left across the punctured torus (blue) is indicated by the dashed line with an arrow.   Step 2: We deform each component to 1-handles attached to the 0-handle. This technique works when these components are immersed.}
    \label{fig:handle-slides}
\end{figure}

It follows that we can smoothly deform $\Omega$ on $\mathbf{S}^2$ such that it looks like the left figure of Fig.~\ref{fig:tunnel-general}. It is a 0-handle with $2k + m -1$ 1-handles attached in the order (around the 0-handle) indicated in the figure. Some of the 1-handles have twists. Upon the removal of possible twists, $\Omega$
 becomes $\Omega^\star$ in the right figure of Fig.~\ref{fig:tunnel-general}. \begin{figure}[h]
    \centering
    \includegraphics[width=0.86\textwidth]{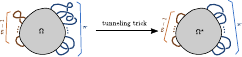}
    \caption{Tunneling trick applied to a connected region immersed in the sphere.}
    \label{fig:tunnel-general}
\end{figure}
In $\Omega^\star$, the 1-handles form two groups: $(m-1)$ of them are separated from others (dark brown), and there are $k$ pairs of 1-handles (dark blue), the two 1-handles in each pair ``cross" each other. Importantly, in both figures in Fig.~\ref{fig:tunnel-general}, we can arrange the 1-handles in the same order, circling around the 0-handle (the chunky gray disk), thanks to the ability to slide the components around each other, as explained in the previous paragraph.

Thus, for any connected $\Omega \looparrowright \mathbf{S}^2$, we can find a ``standard" immersed region $\Omega^\star \looparrowright \mathbf{S}^2$ in the homeomorphism class of $\Omega$ such that we can convert $\Omega$ to $\Omega^\star$ by the tunneling trick. Then, suppose there is an Abelian sector on ``figure-8", the tunneling trick implies the isomorphism
\begin{equation}
    \Sigma(\Omega) \cong \Sigma(\Omega^\star).
\end{equation}
Then, consider another region $\Omega' \looparrowright \mathbf{S}^2$ in the homeomorphism class of $\Omega$, we can do the same trick and show $ \Sigma(\Omega') \cong \Sigma(\Omega^\star)$. It follows that $\Sigma(\Omega) \cong \Sigma(\Omega')$. This completes the argument.

\section{Proofs in the context of 2D Abelian anyon theory}\label{sec:Proof-Abelian}

We can derive a concrete result (Proposition~\ref{Prop:Abelian-result}) for Abelian anyon theories in 2D. It implies that Conjecture~\ref{conj:1} and Conjecture~\ref{conj:2D-strong} are true for Abelian anyon theories.

\begin{Proposition}\label{Prop:Abelian-result}
    If $d_a =1, \forall a \in \calC$, then $d_\mu=1, \forall \mu \in \calC_8$. 
\end{Proposition}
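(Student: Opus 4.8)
The plan is to fix an extreme point $\rho^{\mu}_{X_8}\in\Sigma(X_8)$ and show $d_{\mu}=1$ by verifying one of the equivalent conditions of Lemma~\ref{lemma:d-TFRE}; concretely I would aim at $\Delta(B,C,D)_{\rho^{\mu}}=0$ for the partition of Fig.~\ref{fig:quantum-dim-def}, or the conditional-mutual-information form (condition~4) for the partition of Fig.~\ref{fig:left-middle-right}(c). The starting observation is that in an Abelian theory every extreme point $\rho^{a}_{X}$ on an \emph{embedded} annulus already satisfies these vanishing conditions, because $d_a=1$ means $\Delta(B,C,D)_{\rho^{a}}=0$ by Lemma~\ref{lemma:d-TFRE}. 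So the whole task is to transport this ``cleanliness'' from embedded annuli to the figure-8, and the natural mechanism is the merging technique: I would exhibit $\rho^{\mu}_{X_8}$ as the merge of states living on embedded pieces of $X_8$ on which the axioms together with the Abelian hypothesis already force the relevant $\Delta$'s and conditional mutual informations to vanish.

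For the decomposition I would use the handle picture of Eq.~\eqref{eq:tunnel-8-handle}: $X_8=H_0\cup H_1$ with $H_0$ a $0$-handle (disk) and $H_1$ a twisted $1$-handle (strip), the twist localizing the self-overlap of the immersion inside a single small disk $O$, so that $X_8\setminus O$ is \emph{embedded} in $\mathbf{S}^2$. Equivalently, cover $X_8$ by two embedded annuli $Y_A,Y_B\looparrowright\mathbf{S}^2$, namely thickened neighborhoods of the two lobes of the figure-8 curve (turning number $0$), overlapping in the crossing region. Extending $\rho^{\mu}_{X_8}$ smoothly to a neighborhood of each $Y_i$ and restricting yields, by the simplex theorem, an extreme point of $\Sigma(Y_i)$, hence an anyon label $a_i\in\calC$; here I would also invoke the bijection $|\calC_8|=|\calC|$ and the intuition — to be made precise via the transportation argument — that the two lobes carry ``the anyon and its reverse''. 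Since each $d_{a_i}=1$, Lemma~\ref{lemma:d-TFRE} applied on $Y_A$ and $Y_B$ shows that the restricted states are quantum Markov chains across every natural partition of the lobe.

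The final step is to propagate these Markov / $\Delta=0$ conditions through the merge. Using the merging lemma along the overlap disk(s), $\rho^{\mu}_{X_8}$ is the unique state compatible with the clean states on $Y_A$ and $Y_B$ and conditionally independent across the seam; strong subadditivity then bounds the target combination $\Delta(B,C,D)_{\rho^{\mu}}$ for the chosen partition of $X_8$ by a sum of conditional mutual informations and $\Delta$'s evaluated on the embedded lobes, each of which vanishes. Hence $\Delta(B,C,D)_{\rho^{\mu}}=0$, so $d_{\mu}=1$; since $\mu\in\calC_8$ was an arbitrary extreme point, $d_{\mu}=1$ for all $\mu\in\calC_8$, which is the Proposition, and in particular settles Conjecture~\ref{conj:1} and Conjecture~\ref{conj:2D-strong} in the Abelian case.

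The main obstacle I expect is precisely the self-overlap disk $O$: a state on $X_8$ carries \emph{two} tensor copies of the Hilbert space of $O$, and one must show that the correlations between these two copies are fully dictated by the anyon labels rather than contributing extra entropy — equivalently, that $\Delta$ does not pick up an anomalous term across the seam where the immersion folds back on itself. It is here that the Abelian hypothesis does the real work: it forces all candidate obstructing conditional mutual informations to vanish, making the merge ``rigid'' and $\rho^{\mu}_{X_8}$ essentially a decorated copy of the reference state near $O$. A secondary point to handle with care is surjectivity — that every $\mu\in\calC_8$ arises from such a merge — which should follow from counting ($|\calC_8|=|\calC|$) together with the fact that merging extreme points yields extreme points. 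In the non-Abelian case this rigidity genuinely fails, consistent with the general statement being left open.
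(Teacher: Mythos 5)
Your target is right (condition~4 of Lemma~\ref{lemma:d-TFRE}), and you correctly sense that the crossing region is the crux, but the mechanism you propose fails at its first step. There are no embedded sub-annuli of $X_8$: any subregion of the abstract annulus on which the immersion is injective must omit one of the two sheets over the crossing disk, and is therefore a disk (a rectangle), not an annulus. The ``thickened lobes'' $Y_A,Y_B$ you describe are embedded annuli in $\mathbf{S}^2$, but they are not subregions of the immersed $X_8$ in the sense required to restrict $\rho^{\mu}_{X_8}$ to them --- each lobe enters the crossing disk along one branch and leaves along the other, so in the abstract annulus its two ends are joined only through the \emph{other} lobe. Consequently every restriction of $\rho^{\mu}_{X_8}$ to an embedded piece is just the reference state on a topologically trivial region and carries no anyon label; your invocation of $d_{a_i}=1$ on the lobes is vacuous, and if the remaining merging-plus-SSA argument were valid it would prove the proposition with no Abelian hypothesis at all, contradicting the fact that the non-Abelian case is genuinely open. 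The long-range quantity $I(A:E|BCD)_{\rho^{\mu}}$ measures correlations around the whole loop and cannot be bounded by local data on embedded patches via strong subadditivity alone.

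The idea you relegate to an aside --- the transportation argument --- is actually the engine of the proof. The paper's route is dynamical: punch a hole in $X_8$ carrying the vacuum, merge in a disk so the vacuum splits as $\frac{1}{|\calC|}\sum_{a}\,a\bar{a}$ with a Markov condition $I(X_8^{\circ}:Z|Y)=0$ across the merging seam, then smoothly drag the hole carrying $a$ almost all the way around the figure-8. Because transportation of an embedded test annulus within a disk cannot permute anyon types, $a$ returns as $a$; because the theory is Abelian, the fusion $a\times\bar{a}=1$ is deterministic and the state labeled $a\bar{a}\mu\mu$ is unique, so the deformed state equals the original one and the Markov condition survives the stretching of $Y,Z$ into long strips winding around the loop. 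Tracing out the strip re-fuses the pair to the vacuum and leaves $I(AE:C|BD)_{\rho^{\mu}}=0$ on a partition that wraps $X_8$, from which a short entropy computation (using the factorization and Markov properties of extreme points of sectorizable regions) yields $I(A:E|BCD)_{\rho^{\mu}}=0$, i.e.\ condition~4 of Lemma~\ref{lemma:d-TFRE}. Note also that your ``surjectivity'' worry is a non-issue in this approach --- one simply takes an arbitrary extreme point $\rho^{\mu}_{X_8}$ from the start --- whereas your version would additionally need $|\calC_8|=|\calC|$ and a realization statement, neither of which is available at this point in the paper. If you rebuild your argument with the nucleate--transport--refuse step as the centerpiece, you will essentially recover the paper's proof.
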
  

The proof uses Fig.~\ref{fig:Abelian-sketch}, where the key idea is that any anyon transported along the ``figure-8" does not change its type. Same with Section~\ref{sec:fun}, we would like to think of the figure-8 region in Fig.~\ref{fig:Abelian-sketch} to be embedded in either a ball or a sphere. However, it turns out that the argument only needs a certain property of the neighborhood of the figure-8 region. This is advantageous, and the same idea works in the transportation experiment in the next section.

\begin{figure}[h]
    \centering
    \includegraphics[width=0.97 \textwidth]{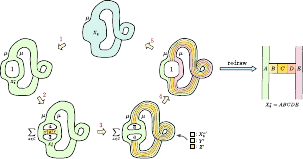}
    \caption{Operations on an extreme point $\rho^\mu_{X_8} \in \Sigma(X_8)$, where $X_8$ is an immersed figure-8 and $\mu\in \calC_8$. These are useful in proving Proposition~\ref{Prop:Abelian-result}, where we require $d_a=1, \forall a\in \calC$. The end result of the analysis is a useful condition $I(A:E|BCD)_{\rho^{\mu}}=0$ on the figure we redraw on the right; only part of the annuli $A$ and $E$ are shown.}
    \label{fig:Abelian-sketch}
\end{figure}

\begin{proof}
We consider a sequence of operations applying to a state on ``figure-8" $X_8$ carrying an arbitrary $\mu \in \calC_8$; see Fig.~\ref{fig:Abelian-sketch}.  We denote this extreme point of $\Sigma(X_8)$ as $\rho^\mu_{X_8}$. We draw $X_8$ in such a way that part of the ``figure-8" annulus is blown up. The region is like a ``watch with a twisted belt". We shall explain the steps in Fig.~\ref{fig:Abelian-sketch} and the fact that they imply a useful condition, Eq.~\eqref{eq:Abelian-CMI}, on the $\rho^{\mu}_{X_8}$. Below are the details.

In Step 1, we take a partial trace
    to create a hole in the interior of $X_8$. We call the resulting region $X_8^\circ$. The hole carries the vacuum sector $1\in \calC$. The two boundaries of the original figure-8 shape carry identical superselection sectors $\mu \in \calC_8$.
    In Step 2, we merge a disk ($YZ$) to the hole. The resulting region $X_8^\circ YZ$ is in a state
    \begin{equation}\label{eq:rho-mu}
        \tilde{\rho}^{\mu}_{X_8^\circ YZ} = \frac{1}{|\calC|} \sum_{a\in \calC} \rho^{a \bar{a} \mu \mu}_{X_8^\circ YZ}, \qquad I(X_8^\circ:Z|Y)_{\tilde{\rho}^\mu} =0,
    \end{equation}
    where the state $\tilde{\rho}^{\mu}_{X_8^\circ YZ}$ is an element in the information convex set $\Sigma(X_8^\circ YZ)$ and it is the indicated convex combination of extreme points $\rho^{a \bar{a} \mu \mu}_{X_8^\circ YZ}$ of $\Sigma(X_8^\circ YZ)$; each state $\rho^{a \bar{a} \mu \mu}_{X_8^\circ YZ}$ has the superselection sectors indicated on respective boundaries in the figure (after step 2). 

    In step 3, we smoothly deform the region ``figure-8-with-2-holes" back to itself. The deformation is such that the hole with anyon sector $a$ is transported along the figure-8 by almost a complete loop (initially moving upward), and the hole that carries $\bar{a}$ is also transported upwards slightly. Thus regions $Y$ and $Z$ are stretched during step 3 and become thin (and long) immersed strips $Y'$ and $Z'$.  
    After step 3, $\bar{a}$ is in the upper hole and $a$ is in the lower hole of ``figure-8-with-2-holes". 
    
    Although it might sound an obvious comment, we emphasize that transporting an anyon along the immersed figure-8 (embedded in a ball or a sphere) cannot change anyon type.\footnote{This is a special instance of Lemma~4.3 of \cite{shi2020fusion}, which says no matter how we transport an annulus within a fixed large disk (keep the annulus embedded during the process) and come back to its original position, anyons cannot be permuted.}
    It is because of this important property that we can conclude the argument. Suppose $X_8^\circ$, $Y$ and $Z$ are deformed into ${X_8^\circ}'$, $Y'$ and $Z'$ respectively after step 3. Then the state after step 3 is
    \begin{equation}\label{eq:rho-mu-pri}
        \tilde{\rho}'^{\mu}_{{X_8^\circ}' Y'Z'} = \frac{1}{|\calC|} \sum_{a\in \calC} \rho^{a \bar{a}\mu \mu}_{{X_8^\circ}' Y'Z'}, \qquad I(X':Z'|Y')_{\tilde{\rho}'^{\mu}} =0.
    \end{equation}
    In fact, $\rho^{a \bar{a}\mu \mu}_{{X_8^\circ}' Y'Z'} = \rho^{a \bar{a}\mu \mu}_{{X_8^\circ} YZ}$ because we deformed the "figure-8-with-2-holes" back to itself. (The uniqueness of the state with labeling $a\bar{a}\mu\mu$ follows from the fact that $a,\bar{a} \in \calC$ are Abelian.) Thus $\tilde{\rho}'^{\mu} = \tilde{\rho}^{\mu}$. The vanishing of conditional mutual information in \eqref{eq:rho-mu-pri} follows from that in \eqref{eq:rho-mu} and the fact that smooth deformation of the regions preserves the conditional mutual information.

    In Step 4, we trace out the horizontal strip, and this effectively brings $a$ and $\bar{a}$ together.  According to the Abelian fusion rule $a \times \bar{a} =1$ (derived in entanglement bootstrap in a self-contained way), the fusion outcome must be in the vacuum sector $1\in \calC$.
    The resulting state on $X_8^\circ = ABCDE$ is nothing but the reduced density matrix of $\rho^\mu_{X_8}$! (Step 5 is intended to be a dummy step indicating that it is possible to fill in the hole and get a state $\rho^\mu_{X_8}$. This is precisely the content of the previous sentence.) 
    Let us relabel the regions as the figure we redraw on the right, i.e., ${X_8^\circ}' \supset AE$, $Y'=BD$ and $Z'=C$.
    By the strong subadditivity  
    \begin{equation}
        I(AE:C|BD)_{\rho^{\mu}} \le I({X_8^\circ}':Z'|Y')_{\tilde{\rho}'^{\mu}}=0.
    \end{equation}
    We used the fact that $\tilde{\rho}'^{\mu}$, when reduced to $X_8^\circ$, gives $\rho^{\mu}_{X_8^\circ}$. 
    Thus, $ I(AE:C|BD)_{\rho^{\mu}}=0$.

    Below, we show $I(AE:C|BD)_{\rho^{\mu}} =0$ implies a very useful condition
    \begin{equation}\label{eq:Abelian-CMI}
        \boxed{ I(A:E|BCD)_{\rho^{\mu}} =0.}
    \end{equation} 
    The steps are as follows.
    \begin{equation}
    \begin{aligned}  
      0 & = I(AE:C |BD)_{\rho^{\mu}} \\
        &= S_{ABDE} + S_{BCD} - S_{BD} - S_{ABCDE} \\
        & = (S_{AB} + S_{DE}) + (S_{BC} + S_{CD} - S_C) - (S_{B} + S_D) - S_{ABCDE} \\
        & = (S_{AB} + S_{BC} - S_B) + ( S_{DE} + S_{CD} - S_D) - S_C - S_{ABCDE} \\
        & = S_{ABC} + S_{CDE} - S_C - S_{ABCDE}\\
        & = I(AB:DE|C)_{\rho^{\mu}} \\
        & \ge I(A:E|BCD)_{\rho^{\mu}}.
    \end{aligned}
    \end{equation}
    In the third line, we use the fact that the extreme point $\rho^\mu_{X_8}$ factorizes on some region pairs such as $(AB,DE)$ and $(B,D)$ and the quantum Markov state condition $I(B:D|C)_{\rho^\mu}=0$. In the fifth line, we used $I(A:C|B)_{\rho^\mu} =0$ and $I(C:E|D)_{\rho^\mu}=0$. The last line follows from strong subadditivity.
    Thus,  Eq.~\eqref{eq:Abelian-CMI} holds.  
   Then, by Lemma~\ref{lemma:d-TFRE} (item 4 $\Rightarrow$ item 0), any $\mu \in \calC_8$ is Abelian.   
\end{proof} 

The key ingredient of the above proof is that transporting anyon along the figure-8 does not change its type; here, we assume that the figure-8 region is immersed in a ball or a sphere. Again, we emphasize that the crucial input is the anyon transportation property in the neighborhood of the immersed region, and it is not essential to assume the immersion is in a ball or sphere if we know this property. This observation will be useful later in Section~\ref{sec:transportation}.

The special property of Abelian anyon theory we make use of is the deterministic fusion outcome $a \times \bar{a} =1$. Because of this, the proof idea cannot be generalized easily to non-Abelian theories. One may wonder if a careful analysis of the fusion spaces involving multiple sectors in $\calC$ and $\calC_8$ can be helpful in the general case. A complementary idea is to try to take full advantage of the fact that the figure-8 annulus is immersed in a sphere. We leave the general case as an open problem.

\section{Transportation experiment}\label{sec:transportation} 

We discuss a thought experiment, ``transportation experiment," which reveals the physical connection between immersed annuli with topological defects. 
The entanglement bootstrap setup in this section is relaxed compared with that in Sections~\ref{sec:fun} and \ref{sec:Proof-Abelian}.  Let us consider 2D for simplicity. We allow the reference state to have ``defect points" where the axioms may be violated. (Alternatively, we consider a reference state defined on a manifold $N$ with an interesting topology, e.g., a torus or a $k$-hole disk.) The setup is illustrated in Fig.~\ref{fig:transport-intro}. 

\begin{figure}[h]
    \centering
    \includegraphics[width=0.88\textwidth]{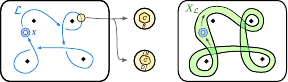}
    \caption{The entanglement bootstrap setup of Section~\ref{sec:transportation} and the main idea of the ``transportation experiment". The black dots in the figure are places we do not impose the axioms. The loop $\calL$ along which we transport the ``test annulus" (blue) is away from these black dots. The thickening of $\calL$ denoted by $X_{\calL}$ (green) is an immersed annulus.}
    \label{fig:transport-intro}
\end{figure}

We want to transport an anyon along a closed loop $\calL$, so that the anyon goes back to the original position in space. In entanglement bootstrap, this is done by transporting a ``test annulus" $X$, which is a sufficiently small embedded annulus that detects the anyon type by the information convex set $\Sigma(X)$.
Suppose we transport the test annulus $X$ along a closed loop $\calL$, so that it goes back to its original position, as illustrated in Fig.~\ref{fig:transport-intro}. We can assign an automorphism of the information convex set determined by the loop $\calL$:\footnote{Accurately speaking, the automorphism is determined by a loop $\calL$ with a chosen orientation. We may denote an oriented loop as $\vec\calL$. However, we shall use $\calL$ for simplicity. Quantities of interest below, such as the subset of anyons not permuted by the transportation, do not depend on the orientation. (For a discussion of automorphisms of information convex sets in broader setups, see Appendix~\ref{app:loop-space}.)}
\begin{equation}\label{eq:Psi-L}
    \Phi(\calL) : \Sigma(X) \to \Sigma(X).
\end{equation}
In general, the automorphism $\Phi(\calL)$ can permute extreme points of $\Sigma(X)$, and thus, it permutes anyon types. In other words, for each $\calL$, we have an automorphism of labels $\varphi_\calL: \calC \to \calC$. The automorphism  
is invariant under smooth deformations of loop $\calL$ that do not pass the defect points.   
Explicit examples come from topological defects, which we recall in Section~\ref{sec:defect-toric}. 
Throughout this section, we consider loops $\calL$ whose thickening ($X_\calL$) is an immersed annulus.\footnote{This excludes some loops on nonorientable manifolds. For instance, the noncontractible loop on $RP^2$ can be thickened to an M\"obius strip but not an annulus.}   
We define
\begin{equation}
    \calI(\calL) \equiv \left\{ a\in \calC | \, a = \varphi_\calL (a) \right\}
\end{equation}
as the subset of superselection sectors in $\calC$ that are preserved under the automorphism. Therefore, $|\calI(\calL)| = |\calC|$ if and only if no anyons are permuted.
We let $\calC_\calL$ be the set of superselection sectors associated with the extreme points of $\Sigma(X_\calL)$.

\begin{definition}[parallel transportation]
Transportation along $\calL$ is parallel if $|\calI(\calL)| = |\calC|$.
\end{definition} 
In this section, we ask the following questions.
\begin{equation}\label{eq:defect-question}
  \textrm{Question:} \,\, \qquad    |\calI(\calL)| = |\calC| \quad \overset{?}{\Longleftrightarrow}  \quad \exists \mu \in \calC_\calL  \,\, \textrm{ s.t. } d_\mu=1.
\end{equation} 
We explain why the ``$\Leftarrow$" direction is a relatively simple fact (Proposition~\ref{fact:parallel}). We conjecture that the statement is true in the ``$\Rightarrow$" direction (Conjecture~\ref{conj:transportation}). 
We explain the fact that ``$\Rightarrow$" is true for Abelian anyon theories, adopting the discussion in Section~\ref{sec:Proof-Abelian} to this context. Note that Conjecture~\ref{conj:transportation} implies the existence of an Abelian sector on the figure-8 annulus as a special case; thus, it also implies the strong isomorphism conjecture in 2D. 
Another result is Proposition~\ref{prop:parallel-not}, which states $|\calC_\calL| = | \calI(\calL) |$. It implies $|\calC_8| = |\calC|$ as a special case.

\subsection{Topological defects: an example}\label{sec:defect-toric}

A well-known example of a topological defect is the one that permutes $e$ and $m$ in the toric code model \cite{Bombin2010}. The anyon types of the toric code are $\calC=\{ 1, e, m, f\}$.  Their quantum dimensions are $d_1 = d_e = d_m = d_f =1$. 
Topological defects in this model are non-Abelian  $\calC'= \{\sigma_+,\sigma_-\}$ with $d_{\sigma_+} = d_{\sigma_-}= \sqrt{2}$.  Anyons $e$ and $m$ are permuted when they are transported around the defect (either $\sigma_+$ or $\sigma_-$). While defects are most familiar in the context where global symmetries are present~\cite{Barkeshli2014}, it is understood that the phenomenon of permuting anyons does not go away under arbitrary local perturbations, not preserving any symmetry.

\begin{figure}[h]
    \centering
    \includegraphics[width=0.52 \textwidth]{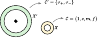}
    \caption{Information convex set is affected by the presence of a defect (black dot). The illustration is for the toric code with a defect that permutes $e$ and $m$.}
    \label{fig:defect-toric}
\end{figure}

The presence of a defect affects information convex sets.
This is illustrated in Fig.~\ref{fig:defect-toric} for the toric code example.
The annulus $X'$ surrounds the defect, and its information convex set $\Sigma(X')$ has two extreme points. The quantum dimension for each extreme point, defined according to Definition~\ref{def:d_h-2D}, agrees with $d_{\sigma_+}= d_{\sigma_-} = \sqrt{2}$. Thus, no Abelian sector exists on the annulus $X'$.
If we consider an embedded annulus $X$ not surrounding any defect, then the information convex set $\Sigma(X)$ is a simplex with four extreme points. These extreme points correspond to Abelian sectors $1$, $e$, $m$, and $f$. The model is solvable, and explicit computation\footnote{Similar computation can be done in some non-Abelian models, such as quantum double using the ``minimum diagram" technique \cite{knots-paper,Shi:2018krj}.} can be done to verify these statements. Putting aside the details, examples like this are nontrivial contexts for question \eqref{eq:defect-question}.

\subsection{Constraints of transportation}\label{sec:constraints-transport}

Below, we analyze the constraints of the transportation problem (Fig.~\ref{fig:transport-intro}). The entanglement bootstrap setup and notations have been described at the beginning of the section. 
As a useful fact, we note $1 \in \calI(\calL)$.  
This is because the test annulus $X$ remains embedded in the entire process, and we can reversibly fill in the hole at any step if it carries the vacuum sector $1\in \calC$. 
We start with a general statement:

 \begin{Proposition}\label{prop:parallel-not}
    $|\calC_\calL| = | \calI(\calL)|$. In particular, if the transportation is parallel, $|\calC_\calL| = |\calC|$.
\end{Proposition}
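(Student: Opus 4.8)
\textbf{Proof proposal for Proposition~\ref{prop:parallel-not}.}

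The plan is to establish the equality $|\calC_\calL| = |\calI(\calL)|$ by constructing an explicit bijection between the superselection sectors $\calC_\calL$ of the thickened loop $X_\calL$ and the set $\calI(\calL)$ of anyons fixed by the transportation automorphism $\varphi_\calL$. The key structural input is the analysis in Section~\ref{sec:Proof-Abelian}: transporting the test annulus $X$ around $\calL$ and back, while recording the anyon content, is really a statement about a state on the immersed annulus $X_\calL$ that thickens the whole loop. Concretely, I would first note that an extreme point $\rho^a_X \in \Sigma(X)$ with $a \in \calI(\calL)$ can be ``dragged along'' the loop: because $\varphi_\calL(a)=a$, the test annulus returns to its starting position carrying the same sector, so the family of test-annulus states along the loop patches together (via the merging technique, as in the tunneling trick of Section~\ref{sec:tunneling-trick}) into a well-defined state on $X_\calL$, which is an extreme point of $\Sigma(X_\calL)$ labelled by some $\mu(a) \in \calC_\calL$. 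This gives a map $\calI(\calL) \to \calC_\calL$.

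Next I would argue this map is injective: if $a \ne a'$ are both fixed by $\varphi_\calL$ but gave the same $\mu \in \calC_\calL$, then restricting the state on $X_\calL$ back to a small sub-annulus $X$ transverse to the loop (a partial trace, which is well-defined on $\Sigma(X_\calL) \to \Sigma(X)$ up to the isomorphism theorem) would have to return both $\rho^a_X$ and $\rho^{a'}_X$, contradicting that distinct extreme points of the simplex $\Sigma(X)$ are mutually orthogonal. For surjectivity, take any extreme point $\rho^\mu_{X_\calL} \in \Sigma(X_\calL)$; restricting it to a transverse sub-annulus $X$ yields an element of $\Sigma(X)$, and since this restriction is an isomorphism onto its image (the transverse annulus deforms within $X_\calL$ back to itself, and the restriction map intertwines the structure), $\rho^\mu$ restricts to a single extreme point $\rho^a_X$; moreover dragging $\rho^a_X$ once around $\calL$ inside $X_\calL$ must bring it back to itself (the state on $X_\calL$ is globally defined), so $\varphi_\calL(a) = a$, i.e. $a \in \calI(\calL)$, and $\mu = \mu(a)$. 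Hence the map is a bijection and $|\calC_\calL| = |\calI(\calL)|$. The final sentence of the proposition is then immediate: if the transportation is parallel, $|\calI(\calL)| = |\calC|$ by definition of parallel transportation.

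The step I expect to be the main obstacle is making precise that ``restriction of $\Sigma(X_\calL)$ to a transverse sub-annulus $X$ is an isomorphism onto its image'' — equivalently, that an extreme point of $\Sigma(X_\calL)$ restricts to a \emph{single} extreme point of $\Sigma(X)$ rather than a nontrivial mixture, and that the dragging-around-the-loop consistency is automatic. This should follow from the isomorphism theorem together with the merging/Petz-map reversibility arguments already used for the tunneling channels $\Gamma^{\rm att}, \Gamma^{\rm det}$ in Section~\ref{sec:tunneling-trick}, but one has to check that $X$ genuinely deforms back to itself inside a neighborhood of $\calL$ and that the relevant conditional mutual informations vanish so the partial trace is reversible; this is the technical heart, and it is exactly the place where the immersed (as opposed to embedded) nature of $X_\calL$ enters. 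I would also want to double-check the edge case where $X_\calL$ has the turning number of $X_8$ versus that of $X$, since the labelled sets $\calC_\calL$ differ in the two cases but the counting argument is uniform.
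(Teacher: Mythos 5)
Your approach has a genuine gap, and it sits exactly where you suspected --- but the problem is worse than a technical obstacle: the step is false in general. You propose a label-by-label bijection $\calI(\calL)\to\calC_\calL$ and verify it by ``restricting'' a state on $X_\calL$ to a small test annulus $X$ transverse to the loop. There is no such useful restriction. If $X$ is actually contained in $X_\calL$, it is contractible inside $X_\calL$, so \emph{every} element of $\Sigma(X_\calL)$ restricts to the vacuum $\sigma_X$ on it and nothing is distinguished. If instead $X$ is meant to wrap the anyon being transported (the other cycle, once you close the worldsheet into a torus), it is not a subregion of $X_\calL$ at all, and the relation between extreme points of $\Sigma(X)$ and extreme points of $\Sigma(X_\calL)$ is a change of basis of the modular-$S$-matrix type: an extreme point for one cycle is generically a \emph{nontrivial mixture} for the other. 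For the same reason, ``dragging $\rho^a_X$ around $\calL$ and patching'' does not produce an element of $\Sigma(X_\calL)$: a state carrying a localized anyon $a$ is not locally consistent with the reference state near the anyon, so it cannot live in the information convex set of the thickened loop. Your injectivity and surjectivity arguments both rest on these nonexistent maps.

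The paper avoids any bijection of labels and instead proves equality of \emph{cardinalities} by computing a single invariant two ways. One builds the worldsheet of the transported test annulus into a punctured torus $\calW_\calL$ (gluing requires $a\in\calI(\calL)$, one extreme point of $\Sigma_{\hat 1}(\calW_\calL)$ per such $a$), completes it to a torus $T^2$ with reference state $|1_X\rangle$, and then cuts $T^2$ along the two inequivalent annuli: the cut $T^2=X\bar X$ gives $\dim\mathbb{V}(T^2)=\sum_{a,b}\delta_{a,b}\delta_{a,\varphi_\calL(b)}=|\calI(\calL)|$, while the cut $T^2=X_\calL\bar X_\calL$ gives $\dim\mathbb{V}(T^2)\le|\calC_\calL|$, with equality established by showing $|1_X\rangle$ restricts to the maximum-entropy state of $\Sigma(X_\calL)$ via a Levin--Wen partition and strong subadditivity. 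If you want to salvage your write-up, you should replace the bijection with this double-counting of $\dim\mathbb{V}(T^2)$; the completion trick and the maximum-entropy step are the real technical content, and neither appears in your proposal.
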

As a corollary, $|\calC|= |\calC_8|$. This is because transporting anyon along a figure-8 immersed in a ball or sphere cannot change the anyon type.  

Here is a sketch of the key idea of the proof of Prop.~\ref{prop:parallel-not}. The trick is to attach the test annulus to another annulus by a strip. After transporting the test annulus, the strip will be stretched along loop $\calL$. After gluing the test annulus with the other annulus (by merging with another annulus), we get a punctured torus $\calW_\calL$. This process is illustrated in Fig.~\ref{fig:transport-proof}. The main part of the proof is to analyze $\Sigma(\calW_\calL)$, or more conveniently $\Sigma(T^2)$, where the torus $T^2$ has a reference state obtained by an analog of ``vacuum block completion" (see Example~4.27 of \cite{Shi2023-Kirby}).  

\begin{proof} 
Consider the process illustrated in Fig.~\ref{fig:transport-proof}. We start with a disk away from the defect points (black dots) and remove two small balls from it. Let the resulting 2-hole disk be $Y$. Consider a state labeled by $a, 1, a$ on the three boundaries $\rho^{a1a}_Y \in \Sigma(Y)$. It must be an extreme point because the corresponding fusion space is one-dimensional.

In step 1, we move the annulus that carries $a\in \calC$ (the test annulus we choose) upwards so that the region $Y $ is deformed into an immersed region $Y'$. One of the boundaries of $Y'$ is now labeled by $\hat{1}$, indicating that it is the vacuum sector deformed onto an immersed annulus, so we cannot compare the state directly to the reference state. 
A fact useful later is that $\hat{1}$ is Abelian. In step 2, we transport the test annulus further along the closed loop $\calL$ and call the resulting region $Y_\calL$, as illustrated in Fig.~\ref{fig:transport-proof}(c).

 \begin{figure}[h]
    \centering
    \includegraphics[width=0.96 \textwidth]{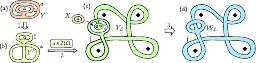}
    \caption{Illustrated is the construction of an extreme point in $\Sigma_{\hat{1}}(\calW_\calL)$, which is useful in the proof of Prop.~\ref{prop:parallel-not}.}
    \label{fig:transport-proof}
\end{figure}

If $a \in \calI(\calL)$, that is, $a$ is invariant under the transportation along $\calL$, we do step 3 of Fig.~\ref{fig:transport-proof}, which glues annulus $X$ with the state on $Y_\calL$ to a punctured torus $\calW_\calL$. In terms of quantum states, we merge two quantum states, which are extreme points of $\Sigma(X)$  and $\Sigma(Y_\calL)$ (thickening a certain region before merging if needed), with the superselection sectors labeled in Fig.~\ref{fig:transport-proof}(c). Here, $a \in \calI(\calL)$ is required so that the two states match on their overlap regions. For each valid choice of $a$, we get an extreme point of $\Sigma_{\hat{1}}(\calW_\calL)$.
Note that $1\in \calI(\calL) $, and therefore $\Sigma_{\hat{1}}(\calW_\calL)$ is nonempty.
Note that the punctured torus $\calW_\calL$ contains annuli $X_\calL$ and an embedded test annulus $X$.

Next, we ``complete" $\calW_\calL$ into a torus $T^2$ with a valid reference state. This step is optional, but it is both natural and convenient. The details are as follows. Choose $a=1$, and run the process in Fig.~\ref{fig:transport-proof} to obtain an extreme point $\rho^{|1_X\rangle}_{\calW_\calL} \in \Sigma_{\hat{1}}(\calW_\calL)$. By the completion trick (Lemma~4.4 of \cite{Shi2023-Kirby}), we first thicken $\calW_\calL$ and then heal the puncture. By doing this, we obtain a pure state $|1_X\rangle$ on a torus $T^2$. By construction, $T^2 \supset \calW_\calL$, and because $\hat{1}$ on the thickened boundary of $ \calW_{\calL}$ is Abelian, $|1_X\rangle$ is a valid reference state: it satisfies {\bf A0} and {\bf A1} everywhere. (The notation $|1_X\rangle$ is a reminder that if we reduce it to $X$, we get the reference state $\sigma_X$.)

Having explained the construction, we explain various constraints on the information convex sets, which leads to the conclusion.
First, by applying the associativity to step 3 (combining $X$ and $Y_\calL$ into $\calW_\calL$), we have $\dim \mathbb{V}_{\hat{1}} (\mathcal{W}_\calL) = |\calI(\calL)|$, this is by counting the number of ways to match the boundary conditions in Fig.~\ref{fig:transport-proof}. 
     If we take the reference state $|1_X\rangle$ on the ``completed" torus $T^2$, we further have
     $\dim \mathbb{V} (T^2) = |\calI(\calL)|$.
     This equality is easy to understand from the point of view that $T^2 = X \bar{X}$, where $\bar{X}$ is the complement of $X$ on $T^2$. Both $X$ and $\bar{X}$ are annuli and sectorizable. Applying the associativity to the gluing of the two annuli, we have 
    \begin{equation}\label{eq:XXbar}
       \dim \mathbb{V} (T^2) = \sum_{a,b \,\in \, \calC} \delta_{a,b}\, \delta_{a,\varphi_\calL(b)} =|\calI(\calL)|,
    \end{equation} 
    where $\delta_{a,b}$ comes from the matching of the sector on one shared boundary of the two annuli, and $\delta_{a,\varphi_\calL(b)}$ comes from the other shared boundary. The appearance of $\varphi_\calL$ is due to the transportation property.

 Similarly, we cut the torus along another direction into two annuli, $T^2 = X_\calL \bar{X}_\calL$. Here, $X_\calL$ is the immersed annulus that thickens $\calL$, a notation explained in Fig.~\ref{fig:transport-intro}. 
       Not only are $ X_\calL$ and $ \bar{X}_\calL$ both sectorizable, but they must also have isomorphic information convex sets $\Sigma(X_\calL) \cong \Sigma(\bar{X}_\calL)$.
       This is because $ X_\calL$ deforms to $ \bar{X}_\calL$ smoothly on $T^2$. So they both characterize $\calC_\calL$.   
      From a computation parallel to Eq.~\eqref{eq:XXbar}, we see
     \begin{equation}\label{eq:le-transport}
         \dim \mathbb{V}(T^2) \le |\calC_\calL|.
     \end{equation}
     The possibility to have ``$<$" is associated with the possibility that only a subset of labels in $\calC_\calL$ survive the matching and appear on the torus.     
     In particular, if ``$<$" applies, we cannot obtain the maximum-entropy state in $\Sigma(X_\calL)$ by a partial trace from any state in $\Sigma(T^2)$.

      Below, we show ``$=$" in Eq.~\eqref{eq:le-transport} must be realized. This is done by explicitly finding a state in $\Sigma(T^2)$ that reduces to the maximum-entropy state of $\Sigma(X_\calL)$. The state we identify is precisely $|1_X\rangle$. (The logic below is familiar. See, e.g., Lemma 5.7 of \cite{Shi2023-Kirby}.) 
     Recall that $|1_X\rangle$ reduces to the test annulus $X$ gives the vacuum state $\rho^1_X \equiv \sigma_X$, an Abelian extreme point.  
     Now apply the definition of quantum dimension \eqref{eq:quantum-dim} to the partition of $X=BCD$ as Fig.~\ref{fig:quantum-dim-def}, with the requirement that $B= \partial X \cap X_\calL$. By strong subadditivity,
     \begin{equation}
          0 = \Delta(B,C,D)_{|1_X\rangle }  \ge  I(\bar{X}:C|B)_{|1_X\rangle} \ge I(A:C_-|B)_{|1_X\rangle} ,
     \end{equation}
    for any $A \subset \bar{X}$ and $C_- \subset C$. The specific choice $A= X_\calL \setminus X$ and $C_- = X_\calL \cap (X \setminus \partial X)$ is what we want because
     $A B C_-$ is now a Levin-Wen partition \cite{Levin2005} of the immersed annulus $X_\calL$ (namely, $A$ and $C_-$ are two disks, and they are separated by $B$).   
       $I(A:C_-|B)_{|1_X\rangle}=0$ for the Levin-Wen partition \cite{levin2006detecting} of $X_\calL$ implies that $|1_X\rangle$ reduces to $X_\calL$ is the maximum-entropy state of $\Sigma(X_\calL)$. This implies ``$=$" applies to Eq.~\eqref{eq:le-transport}.  
\end{proof}

\begin{Proposition} \label{fact:parallel} 
If there exists $\mu \in \calC_\calL$ with $d_\mu =1$, the transportation along $\calL$ is parallel. 
\end{Proposition}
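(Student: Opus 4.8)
The plan is to reduce the claim to the tunneling trick of Section~\ref{sec:tunneling-trick}. What must be shown is that the label automorphism $\varphi_\calL\colon \calC\to\calC$ induced by transporting the test annulus once around $\calL$ is the identity map, equivalently $|\calI(\calL)| = |\calC|$. By the standing assumptions of the section, the loop $\calL$ avoids the defect points and its thickening $X_\calL$ is an immersed annulus, so $\Sigma(X_\calL)$ and its quantum dimensions are defined; by hypothesis one extreme point $\rho^\mu_{X_\calL}$ is Abelian. By Lemma~\ref{lemma:d-TFRE} this means $\Delta(B,C,D)_{\rho^\mu}=0$ for the relevant partition of $X_\calL$, i.e. $\rho^\mu_{X_\calL}$ is a quantum Markov state across that cut, which is exactly the input the tunneling trick needs.

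Next I would reorganize the transportation. Attach to the test annulus $X$ a short ``rail'' $1$-handle whose other end is pinned to a fixed small disk near the base point of $\calL$, so that $X$ together with the rail and the disk is homeomorphic to an annulus. Transporting $X$ once around $\calL$ drags the rail along, so that at the end the rail wraps once around $\calL$ while $X$ and the disk are back in place. The key topological observation is that the passage from ``short rail'' to ``rail wrapping once around $\calL$'' is precisely the effect of tunneling the immersed annulus $X_\calL$ through the rail in the sense of Fig.~\ref{fig:tunneling-main}: just as tunneling $X_8$ through a strip inserts a twist, tunneling the neighborhood $X_\calL$ through the rail reproduces exactly the deformation of the rail caused by dragging it once around $\calL$. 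Feeding the Abelian state $\rho^\mu_{X_\calL}$ into this tunneling, the analysis of Eqs.~\eqref{eq:tunnel-channel}--\eqref{eq:alpha} applies verbatim: since $\Delta(B,C,D)_{\rho^\mu}=0$ the attachment and detachment maps are Petz-recoverable, so the induced map on information convex sets is an isomorphism built entirely from Petz maps and partial traces supported near the rail. In particular it acts as the identity on the anyon-detecting core of $X$, so the test annulus returns in its original superselection sector; hence $\varphi_\calL = \mathrm{id}$ and the transportation is parallel.

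An alternative bookkeeping, perhaps cleaner to write out, is to first use the (local) tunneling machinery, with $\rho^\mu_{X_\calL}$ supplying the Abelian sector, to deform the embedded test annulus ($0$-handle with an untwisted $1$-handle) into $X_\calL$ inside the given neighborhood, thereby obtaining $\Sigma(X_\calL)\cong\Sigma(X)$ and in particular $|\calC_\calL| = |\calC|$; then combine with Proposition~\ref{prop:parallel-not} ($|\calC_\calL| = |\calI(\calL)|$) to conclude $|\calI(\calL)| = |\calC|$. Either route uses no assumption on the anyon theory beyond the existence of an Abelian sector on $X_\calL$.

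The step I expect to be the main obstacle is the topological identification in the second paragraph: making precise, at the level of elementary deformations on the coarse-grained lattice, that ``transport the test annulus once around $\calL$ with a rail attached'' coincides with ``tunnel $X_\calL$ through the rail'', including matching regular-homotopy classes and arranging the rail so that the tunneling region stays disjoint from the core of $X$. One should also record explicitly that a thin enough immersed-annulus neighborhood of $\calL$ exists and avoids the defect points, so that $\rho^\mu_{X_\calL}$ and its quantum dimension make sense. Once this topological dictionary is in place, all the analytic content is imported from Section~\ref{sec:tunneling-trick} and Lemma~\ref{lemma:d-TFRE}, which is why the statement is ``relatively simple''.
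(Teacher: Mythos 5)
Your route is genuinely different from the paper's, and the gap sits exactly where you flagged it --- but the obstacle is not merely topological bookkeeping. The transportation automorphism $\Phi(\calL)$ is a pure smooth deformation: a sequence of elementary restrictions and extensions driven entirely by the reference state $\sigma$ along the path. The tunneling channel of Section~\ref{sec:tunneling-trick} is attach--deform--detach: it \emph{injects} the external extreme point $\rho^\mu_{X_\calL}$ by merging and later traces it out. Even if you match the regular-homotopy classes of the rail-augmented region before and after, nothing in your argument shows that these two maps between the corresponding information convex sets coincide, or even that they induce the same permutation of extreme points. Worse, your closing step ``the tunneling channel is supported near the rail, hence preserves the sector on the core of $X$'' does not use the Abelian hypothesis at all (attachment and detachment are local maps for any $\mu$; they just fail to be isomorphisms when $\mu$ is non-Abelian), so if the identification of the two processes held, the same reasoning would prove that every transportation is parallel, contradicting the toric-code defect example of Section~\ref{sec:defect-toric}. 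In your alternative bookkeeping the same unproven step reappears as the claim that $X_\calL$ can be tunneled through the $1$-handle of the test annulus and that the result is $\overset{N}{\sim} X_\calL$; Section~\ref{sec:tunneling-trick} establishes the existence of the intermediate regular homotopy only for $X_8$ and $X$ in a disk or sphere, not for an annulus wrapping around distant defects.

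The paper's proof sidesteps all of this. Take the Abelian extreme point $\rho^\mu_{X_\calL}$, regard $X_\calL$ abstractly as a punctured disk, and heal the puncture by the completion trick (Lemma~4.4 of \cite{Shi2023-Kirby}); because $\mu$ is Abelian, the resulting state $\tilde\sigma$ satisfies {\bf A0} and {\bf A1} on the entire disk, so it is a valid reference state on a disk containing $X_\calL$. Lemma~4.3 of \cite{shi2020fusion} then says that transporting a test annulus inside a disk equipped with a valid reference state cannot permute anyon types, and since the transportation along $\calL$ can be carried out entirely inside $X_\calL$ (where $\tilde\sigma$ is locally indistinguishable from $\sigma$), $\varphi_\calL=\mathrm{id}$. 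If you want to salvage a tunneling-flavored argument, you would have to prove both the missing regular-homotopy statement in the defected background and, separately, the equality of the deformation channel with the tunneling channel; neither is supplied by the results you cite.
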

\begin{proof}
    The annulus $X_\calL$ can be viewed as a disk with a puncture. 
    If the state on $X_\calL$ is an extreme point, we can use the completion trick (Lemma~4.4 of \cite{Shi2023-Kirby}) to heal a puncture of $X_\calL$ to obtain a state on the disk. The resulting state satisfies the axioms on the entire disk if the extreme point is associated with an Abelian sector ($\mu \in \calC_\calL$ with $d_\mu=1$). Let this state be $\tilde{\sigma}$.
   It is known that transporting a test annulus on a disk cannot permute anyon types, supposing that the disk has a reference state. (This is by Lemma~4.3 of \cite{shi2020fusion}). As the loop $X_\calL$ is identified as part of the disk for which we have a valid reference state $\tilde{\sigma}$, the transportation of the test annulus on $X_\calL$ cannot permute anyons. Thus, the transportation along $\calL$ is parallel. 
\end{proof}

\begin{Conjecture}\label{conj:transportation}
If the transportation along $\calL$ is parallel, there exists $\mu \in \calC_\calL$ with $d_\mu =1$.
\end{Conjecture}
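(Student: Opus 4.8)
The plan is to attack Conjecture~\ref{conj:transportation} by transplanting the argument of Proposition~\ref{Prop:Abelian-result} to the transportation setup, invoking the hypothesis that the transport along $\calL$ is parallel precisely where the figure-8 proof used the automatic fact that transport within a ball cannot permute anyons. Fix an arbitrary extreme point $\rho^\mu_{X_\calL}\in\Sigma(X_\calL)$. By Lemma~\ref{lemma:d-TFRE} it suffices to exhibit one $\mu\in\calC_\calL$ for which $\Delta(B,C,D)_{\rho^\mu}=0$ in the standard annulus partition, equivalently $I(A:E|BCD)_{\rho^\mu}=0$ in the five-region partition $ABCDE$ used at the end of Section~\ref{sec:Proof-Abelian}. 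So the target is a single extreme point of $\Sigma(X_\calL)$ on which this conditional mutual information vanishes.

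Mirroring Proposition~\ref{Prop:Abelian-result}, I would (1) take a partial trace to open a hole in the interior of $X_\calL$, obtaining an immersed region $X_\calL^\circ$ whose new inner boundary carries the vacuum $1\in\calC$; (2) merge a disk onto that hole, producing a state of the form $\tilde\rho^\mu=\frac{1}{|\calC|}\sum_{a\in\calC}\rho^{a\bar a\mu\mu}$ with $I(X_\calL^\circ:Z|Y)_{\tilde\rho^\mu}=0$; (3) smoothly deform the region so that the hole labelled $a$ is transported once around $\calL$ while its partner $\bar a$ moves only slightly, stretching the auxiliary disks $Y,Z$ into thin immersed bands $Y',Z'$; (4) trace out the band now running along $\calL$, fusing $a$ with $\bar a$; (5) conclude $I(AE:C|BD)_{\rho^\mu}=0$ and then, by the same strong-subadditivity chain as in Section~\ref{sec:Proof-Abelian} (using $I(A:C|B)=I(C:E|D)=I(B:D|C)=0$ and the factorizations of an extreme point), $I(A:E|BCD)_{\rho^\mu}=0$. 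In step (3), parallelism of $\calL$ guarantees the $a$-hole returns carrying $a$ rather than $\varphi_\calL(a)$, and that the configuration has been deformed back to itself; this is the clause that in the figure-8 case was free.

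The genuinely new difficulty is concentrated in steps (3)--(4). In the Abelian case one uses twice that $a,\bar a$ are Abelian: once so that the fusion space with boundary labels $a\bar a\mu\mu$ is one-dimensional, making ``deformed back to itself'' imply ``unchanged'', and once so that $a\times\bar a=1$ forces the fusion outcome in step (4) into the vacuum with certainty. Neither holds for a general anyon theory, so one cannot conclude $\tilde\rho'^\mu=\tilde\rho^\mu$, nor that the fused state lands in the vacuum channel. The repair I would attempt is to give up on tracking a single $\mu$ in isolation and run the argument for the whole simplex $\Sigma(X_\calL)$ at once, combined with the dimension-counting of the proof of Proposition~\ref{prop:parallel-not}: complete $X_\calL$ (through the intermediate punctured torus $\calW_\calL$) into a torus $T^2$ with the reference state $|1_X\rangle$, use $\dim\mathbb{V}(T^2)=|\calI(\calL)|=|\calC|=|\calC_\calL|$ together with the equality case of $\dim\mathbb{V}(T^2)\le|\calC_\calL|$, and exploit that $|1_X\rangle$ restricts to the test annulus $X$ as the Abelian vacuum $\sigma_X$ while restricting to $X_\calL$ as the maximum-entropy state. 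One then wants to extract from this a \emph{single} Abelian extreme point of $\Sigma(X_\calL)$ --- morally the sector obtained by threading the vacuum through $\calL$ --- by pushing the vacuum structure of $X$ through the complementary decomposition $T^2=X_\calL\bar X_\calL$ and tracking which fusion channels survive the matching.

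The step I expect to be the real obstacle is exactly this fusion-channel bookkeeping: showing that, when the surrounding annulus sits in the distinguished ``vacuum-threaded'' sector, the pair $a,\bar a$ is genuinely recombined into the vacuum with certainty even though $a\times\bar a=1+\cdots$. This is plausible, since the pair is created from and reabsorbed into a collar that is only a transported copy of the honest vacuum, but it is not obviously forced, and it is essentially the same gap that leaves Conjecture~\ref{conj:1} open for non-Abelian theories --- a successful argument here would very likely settle that conjecture too. The complementary route flagged at the end of Section~\ref{sec:Proof-Abelian}, leaning on the global topology of the sphere, is unavailable in the transportation setting because of the defect points, which makes controlling the non-Abelian fusion spaces along $\calL$ essentially unavoidable; absent such control, what the plan cleanly delivers is only the reproof, in this more general language, of the Abelian case.
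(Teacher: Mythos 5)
Your proposal matches the paper's own treatment: the paper proves this conjecture only for Abelian anyon theories, by exactly the transplantation you describe --- rerunning the proof of Proposition~\ref{Prop:Abelian-result} with $X_8$ replaced by $X_\calL$ and with the parallelism hypothesis supplying the ``transport does not permute anyons'' input that was automatic for a figure-8 in a sphere. Your diagnosis of why the argument stalls for non-Abelian theories (non-deterministic fusion $a\times\bar a = 1+\cdots$ and higher-dimensional fusion spaces with labels $a\bar a\mu\mu$) is also the paper's stated reason for leaving the general case open, so the proposal is correct and coincides with the paper's approach.
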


\begin{proof}[Proof of conjecture~\ref{conj:transportation} for Abelian anyon theories]
For Abelian anyon theory (that is $d_a =1$ for $\forall a\in \calC$), the conjecture holds. The proof is that of Proposition~\ref{Prop:Abelian-result}, with a few minor and easy-to-state modifications. We only need to replace the immersed figure-8 annulus in Fig.~\ref{fig:Abelian-sketch} with the immersed annulus $X_\calL$ in Fig.~\ref{fig:transport-intro}.
As we noted, every step works here because nowhere in the proof of Proposition~\ref{Prop:Abelian-result} did we use the condition that the annulus is immersed in a sphere. All we needed was the fact that a test annulus carrying $\forall a \in \calC$ transported along the immersed annulus remains in the same superselection sector $a$. For Abelian anyon theory, what we prove is $d_\mu =1$ for \emph{any} $\mu \in \calC_\calL$.
\end{proof}

We do not have proof of the general case. This conjecture implies Conjecture~\ref{conj:1}, and thus, it also implies the 2-dimensional strong isomorphism conjecture (Conjecture~\ref{conj:2D-strong}).

\section{Discussion}\label{sec:discussion}

In this work, we started by asking a simple question: Is there an Abelian superselection sector detected by the immersed figure-8  annulus?" The question is relevant to gapped topologically ordered systems in 2D, which support anyons and is formulated in the framework of entanglement bootstrap. We gave an affirmative answer to this question for Abelian anyon theories, but leave the general case as a conjecture (Conjecture~\ref{conj:1}). We explained the fact that if there is an Abelian sector on figure-8, a certain strong isomorphism  (Conjecture~\ref{conj:2D-strong}) of information convex sets must be true: if two homeomorphic regions $\Omega$ and $\Omega'$ are immersed in the sphere or a disk, the information convex sets $\Sigma(\Omega)$ and $\Sigma(\Omega')$ must be isomorphic. Importantly, it does not matter if $\Omega$ can be smoothly deformed to $\Omega'$ on the sphere (or disk), a situation where the established isomorphism theorem does not apply.

If the conjecture of the existence of an Abelian sector on ``figure-8" annulus
turns out false, a counterexample (e.g., a solvable model) would be extremely interesting. This also means the existence (absence) of Abelian sectors on immersed figure-8 will be a nontrivial (and exotic) criterion for classifying wave functions with entanglement area law as well as topologically ordered phases.

Suppose the existence of an Abelian sector on the figure-8 (Conjecture~\ref{conj:1}) can be proved. Strong isomorphism will be a fact. While strong isomorphism is powerful, it seems to imply that we cannot learn much by looking at a topological space immersed differently. Let me argue this is far from the case. If strong isomorphism holds, a couple of intriguing questions can be pursued. In particular, it means tunneling processes (Section~\ref{sec:tunneling-trick} and Fig.~\ref{fig:tunneling-main}) can generate explicit isomorphisms not allowed by smooth deformation. Thus, automorphisms of information convex sets will not only come from topologically nontrivial ways to smoothly deform an immersed region back to itself but also from ways that utilize tunneling in intermediate steps. We expect these diverse topological classes of automorphisms to be informative. A natural question is: ``Can we use these automorphisms (dancing of quantum states) to extract all the data of the modular tensor category underlying an anyon model?"\footnote{An optimism to this possibility was found independent by Kyle Kawagoe, who attended my chalk talk at math department Ohio State University.} If the answer is affirmative, it will manifestly use a single quantum state, and this will be complementary to the ``microscopic approach" \cite{Kawagoe2019}. Can the mathematical structure coming out of immersion be comparable to the mapping class group?  
Analogous questions in higher dimensions can be asked in parallel and are more open.

A stronger conjecture we postponed until now is the existence of a canonical (Abelian) vacuum state on the immersed figure-8 annulus.\footnote{In some classes of solvable models, such as the quantum double, indeed, such a unique Abelian vacuum on the figure-$8$ can be defined with a technique that depends on the special model. The challenge lies in the general definition. A toy version of the challenge already appears in the chiral semion chiral topological order (assuming the errors of the axioms do not affect the argument).} 
If this is true, the process in Fig.~\ref{fig:large-deformation} of the appendix may well be on the right track in extracting the full set of anyon topological spins. 
This is a meaningful problem that deserves an effort elsewhere.
Proving the existence of a canonical vacuum on figure-8 remains an open problem, even for Abelian anyon theories, in which context we proved Conjecture~\ref{conj:1}.

We further discussed the relevance of immersed annuli in contexts with topological defects. The central question we asked was Eq.~\eqref{eq:defect-question}, relating ``parallel" transportation to the existence of an Abelian sector on the immersed annulus that thickens the transportation loop. The question has two directions, and we provided an affirmative answer to one direction. The other direction is a conjecture that implies our main conjecture about the existence of an Abelian state on figure-8 (Conjecture~\ref{conj:1}) as well as strong isomorphism. 

Although this work focuses on 2D, we expect that the insights gained are useful in higher dimensions as well. Questions stirring in the right direction could be: what are generalizations of the tunneling trick (Section~\ref{sec:tunneling-trick} and Fig.~\ref{fig:tunneling-main}) and transportation experiments (Section~\ref{sec:transportation} and Fig.~\ref{fig:transport-intro}) in higher dimensions?  
 We developed some tools and discussed a few further open problems for higher dimensions in the appendices.

 Finally, we ask whether considering quantum states on regions immersed in a physical system can be useful in other physical contexts. Immersed regions utilize the Hilbert space more efficiently by using local pieces more than once, and they can be made larger and topologically (or geometrically) more interesting than the original physical system available to us. 
 Smooth deformation of immersed regions, a property natural in systems with entanglement area law, will be absent in broader contexts such as gapless systems. Nevertheless, many other benefits of immersion may persist.

\section*{Acknowledgments}
After settling the Abelian case, I made multiple attempts on the general proof of Conjecture~\ref{conj:1} without progress. These years gave me the time to contemplate the potential implications of this conjecture, especially strong isomorphism and tunneling. Conversations with many helped me sharpen these thoughts, including Anuj Apte, Yu-An Chen, Tarun Grover, Isaac Kim, Xiang Li, Ting-Chun Lin, Hanqing Liu, Daniel Ranard, Hao-Yu Sun, Chong Wang, Xueda Wen, and Carolyn Zhang.
I thank John McGreevy and Jin-Long Huang for collaborating on projects that taught me immersion-related topology. I thank Michael Levin for a conversation about topological spin during a UQM conference at Caltech in 2022 and an intriguing discussion with John McGreevy on a train to that conference. I thank my classmate Yang Fan for sharing the video ``outside in" \cite{outside-in} with me many years ago.
I thank Xinyao Zhao for providing a selection of colors, some of which I chose for some figures. I thank Ilya Gruzberg, Kyle Kawagoe, Yuan-Ming Lu, David Penneys, and Stuart Raby for related conversations during my visit to Ohio State University near the completion of this work. This work was supported by the Simons Collaboration on Ultra-Quantum Matter, a grant from the Simons Foundation (652264).

\appendix

\section{Equivalent definitions of quantum dimension}\label{app:equivalent-dim}

Consider the context of $n$-dimensional entanglement bootstrap.
We derive an equivalence relation on the definition of quantum dimensions (Lemma~\ref{lemma:quantum-dim-TFRE}) for immersed sphere shells in $n$-dimensions, where $n \ge 2$ and is arbitrary.  
It applies to immersed annuli as a special case ($n=2$).

Because the setup is general, we introduce some diagrammatic notation for the partitions.  
Let $Z = S^{n-1}\times \mathbb{I}$ be a sphere shell in $n$ space dimensions, where $S^{n-1}$ is a $(n-1)$-sphere and $\mathbb{I}$ is an interval. In Fig.~\ref{fig:TFRE}, the horizontal direction indicates the direction of the interval $\mathbb{I}$. Therefore, each vertical strip that connects the top and the bottom represents a sphere shell. If a strip is cut in half (top and bottom), each piece is a half-sphere shell (topologically a ball).
Note that $Z$ is sectorizable, and thus $\Sigma(Z)$ is a simplex. We denote the superselection sectors associated with the extreme points of $\Sigma(Z)$ as $\calC_Z$.

\begin{figure}[h]
    \centering
    \includegraphics[width=0.86\textwidth]{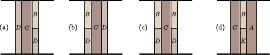}
    \caption{Partitions of a sphere shell $Z= S^{n-1} \times \mathbb{I}$. The horizontal direction is the direction of the interval $\mathbb{I}$.} 
    \label{fig:TFRE}
\end{figure}

We start with a useful fact. An element $\rho_Z$ of $\Sigma(Z)$ is an extreme point if and only if 
\begin{equation}
      \Delta(\partial Z, Z\setminus \partial Z)_{\rho}=0.
\end{equation}
Here $\partial Z$ is the thickened boundary of $Z$, which is $BD$ in the partition Fig.~\ref{fig:TFRE}(a). $Z\setminus \partial Z$ is the interior of $Z$, which is $C$ in the same partition. This condition is the
 \emph{extreme point criterion}~\cite{knots-paper,Shi:2020domainwall}.

\begin{lemma}\label{lemma:quantum-dim-TFRE}
    The following are equivalent definitions of quantum dimension $d_h$ of $h\in \calC_Z$, where $Z$ is an immersed sphere shell in $n$-dimensional space.
\begin{enumerate}
    \item [(a)] $\Delta(B,C,D) = 2 \ln d_h$ for the partition in Fig.~\ref{fig:TFRE}(a).
    \item [(b)] $\Delta(B,C,D) = 2 \ln d_h$ for the partition in Fig.~\ref{fig:TFRE}(b).
    \item [(c)] $\Delta(B,C,D) = 4 \ln d_h$ for the partition in Fig.~\ref{fig:TFRE}(c).
    \item [(d)] $I(A:C|B) = 2 \ln d_h$ for the partition in Fig.~\ref{fig:TFRE}(d).
\end{enumerate}
\end{lemma}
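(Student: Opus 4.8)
The plan is to prove the four characterizations are equivalent by establishing a cycle of implications, or more efficiently, by reducing all four partitions to a common quantity via strong subadditivity and the quantum Markov structure of extreme points. The key structural fact I would invoke throughout is that an extreme point $\rho^h_Z$ of $\Sigma(Z)$, being an extreme point of the information convex set of a sectorizable region, satisfies $I(A':C'|B')_{\rho^h}=0$ for any partition $A'B'C'$ of $Z$ in which $B'$ separates $A'$ from $C'$ along the $\mathbb{I}$-direction and both $A'$ and $C'$ are ``short enough'' half-shells; likewise the state factorizes across the two ends, e.g.\ on disjoint end-balls it is a product. These are the same ingredients used in the proof of Lemma~\ref{lemma:d-TFRE}, now applied one dimension more generally.

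First I would fix the partition (c) of Fig.~\ref{fig:TFRE}(c) as the reference: there $C$ is a full ``middle'' sphere shell, $B$ and $D$ are each the union of two half-shell collars on either side, symmetric about $C$, so $\Delta(B,C,D)=S_{BC}+S_{CD}-S_B-S_D$ sees the sector $h$ from both the top and the bottom copies — this is exactly the doubling that produces the factor $4\ln d_h$ rather than $2\ln d_h$. To connect (c) to (a): in Fig.~\ref{fig:TFRE}(a) the partition is ``one-sided,'' with $B,C,D$ forming a single chain of half-shells running along $\mathbb{I}$, so only one copy of the boundary data contributes; I would show $\Delta(B,C,D)_{(a)}$ equals half of $\Delta(B,C,D)_{(c)}$ by using the mirror symmetry of the (c) partition together with subadditivity/SSA to split the (c) entropy combination into two equal halves, each equal to the (a) combination. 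The passage between (a) and (b) should be the ``easy'' step — both are one-sided, differing only in how the collars are grouped, and the equality $\Delta_{(a)}=\Delta_{(b)}$ follows from the Markov condition $I(\cdot:\cdot|\cdot)_{\rho^h}=0$ on the intermediate region together with the entropy-chain manipulations of the type displayed at the end of the proof of Proposition~\ref{Prop:Abelian-result} (rewriting $S_{BC}+S_{CD}-S_B-S_D$ by inserting and cancelling $S_C$ and using a vanishing CMI). Finally, (d) is obtained from (a) by the standard identity $\Delta(B,C,D) = I(A:C|B) + \Delta(B,C,D) - I(A:C|B)$ specialized to the extreme point, where the partition of Fig.~\ref{fig:TFRE}(d) is chosen so that $A$ is the complementary half-shell to $BCD$; on an extreme point the ``excess'' $\Delta(B,C,D)-I(A:C|B)$ vanishes (this is again the quantum Markov property, $I$ computed across the far end is zero), giving $I(A:C|B)_{(d)} = \Delta(B,C,D)_{(a)} = 2\ln d_h$.

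The main obstacle I anticipate is the bookkeeping in the (c)$\leftrightarrow$(a) step: one must argue carefully that the two-sided entropy combination genuinely \emph{decouples} into two independent one-sided contributions, which requires knowing that $\rho^h_Z$ restricted to the region between the two collars is a product (or at least Markov) across the middle shell $C$ — intuitively true because $C$ ``carries'' the single sector $h$ and screens the two sides, but making this precise needs the structure theorem input that the extreme point looks locally like the reference state away from where the sector is detected, plus an application of SSA to promote a Markov condition to the required additivity of entropies. Once that decoupling is in hand, the remaining implications are routine entropy algebra of the kind already rehearsed in the excerpt, and the factor-of-two pattern ($2,2,4,2$) falls out of how many times each partition ``wraps around'' the sector-carrying shell.
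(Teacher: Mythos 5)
Your proposal is correct and follows essentially the same route as the paper: the one-sided quantities (a), (b), (d) are identified with one another via the extreme-point (quantum Markov) structure plus strong subadditivity, and the two-sided quantity (c) decouples into the sum of two one-sided contributions because the extreme point factorizes as a tensor product across the middle shell --- exactly the decoupling you flag as the main obstacle, which the paper handles as a special case of its decoupling lemma. One small correction: the two halves of (c) are not equal by any ``mirror symmetry'' of the state (none is assumed); each is separately equal to $2\ln d_h$ because each reduces to a one-sided partition of type (a) or (b), both of which have already been shown to equal the $I(A:C|B)$ of partition (d).
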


\begin{remark}
    We shall see in the proof that we do not need to deform the annulus. Therefore, Lemma~\ref{lemma:quantum-dim-TFRE} holds not only for sphere shells immersed in a sphere but also is applicable to the context where the sphere shells are immersed in other topological manifolds.  
\end{remark}

\begin{proof}
    First, we consider the partition in Fig.~\ref{fig:TFRE-proof}(a), where $Z=A B C D E F$.
    By the strong subadditivity, $\Delta(B,C,DE)_{\rho^h} \ge I(A:C|B)_{\rho^h}$. Furthermore,
 because $\rho_Z^h$ is an extreme point of $\Sigma(Z)$, it factorizes as  $\rho_{B C D E F}^h = \rho_{B C D E}^h\otimes \rho_F^h$, which implies
 \begin{equation}\label{eq:use1}
      \Delta(B, C, D E)_{\rho^h}=\Delta(B, C, D E F)_{\rho^h}.
   \end{equation} 
The extreme point $\rho^h_Z$ of $\Sigma(Z)$, satisfies $\Delta(\partial Z, Z\setminus \partial Z)_{\rho^h}=0$, by the ``extreme point criterion". It
implies
\begin{equation}\label{eq:use2}
\begin{aligned}
    S_{C D E F} &= S_Z+S_{A B},  \\
  S_{D E F} & =S_Z+S_{A B C},
\end{aligned}
\end{equation}
where the first line follows letting $\partial Z$ be $CDEF$. The second line uses $0 = \Delta(EF, ABCD) \ge \Delta(DEF, ABC)$.

    \begin{figure}[h]
    \centering
    \includegraphics[width=0.82\textwidth]{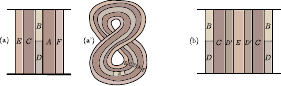} 
    \caption{Partitions used in the proof of Lemma~\ref{lemma:quantum-dim-TFRE}. In (a) and (b), the convention is the same as that in Fig.~\ref{fig:TFRE}. (a') illustrates the partition of (a) in the concrete case that $Z=X_8$ is the ``figure-8".} 
    \label{fig:TFRE-proof}
\end{figure}

 Therefore, combining \eqref{eq:use1} and \eqref{eq:use2}, we get
 \begin{equation}
     \begin{aligned} 
     \Delta(B, C, D E)_{\rho^h} =&\Delta(B, C, D E F)_{\rho^h} \\ 
     =& \left(S_{B C}-S_B+S_{C D E F}-S_{D E F}\right)_{\rho^h} \\ 
     =& \left(S_{B C}-S_B+S_{A B}-S_{A B C}\right)_{\rho^h}\\
     =& I(A : C | B)_{\rho^h}.
     \end{aligned}
 \end{equation}
Thus, among the statements in Lemma~\ref{lemma:quantum-dim-TFRE}, (a) and (d) are equivalent. Similarly, (b) and (d) are equivalent. Therefore, the three statements (a), (b), and (d) in Lemma~\ref{lemma:quantum-dim-TFRE} are equivalent. 
 
The next step is establishing the equivalence between statement (c) and the rest. For this purpose, we consider Fig.~\ref{fig:TFRE-proof}(b), where $Z=BCDD'E$. We will show:
\begin{equation}\label{eq:use3}
    \Delta(B,CD'E,D)_{\rho^h} =   \Delta(B,C,DD')_{\rho^h}.
\end{equation}
Then, because the extreme point $\rho^h_Z$ factorizes as a tensor product on the left of $E$ and right of $E$, we have $\Delta(B,CD'E,D)_{\rho^h} = \Delta_L + \Delta_R$. $\Delta_L$ and $\Delta_R$ give contributions identical to conditions (a) and (b) of Lemma~\ref{lemma:quantum-dim-TFRE}.  Thus, the claimed result holds.  

The rest of the proof is the justification of \eqref{eq:use3}. In fact, this is a special case of the decoupling lemma (Lemma D.1 of \cite{knots-paper}). Nonetheless, we spell out the details so readers do not need to read the more general decoupling lemma.

All regions below are those in Fig.~\ref{fig:TFRE-proof}(b).
Because $\rho^h_Z$ is an extreme point, it satisfies
\begin{equation}\label{eq:useful-for-da}
    \begin{aligned}
        S_{BCD' E} &= S_{BC}  - S_C + S_{CD' E} \\
        S_{CDD'E} & = S_{CDD'} - S_{D'} + S_{D'E}  \\
        S_{D} + S_{D'} & = S_{DD'}\\
        0 &= S_{CD'E} - S_C + S_{D'E} .
    \end{aligned}
\end{equation}
The first and second lines follow from the vanishing of conditional mutual information, bounded by the extreme point criterion: (1st line) $0= \Delta(C,D'E)_{\rho^h} \ge I(B:D'E|C)_{\rho^h}$, (2nd line) $0= \Delta(D',E)_{\rho^h} \ge I(CD:E|D')_{\rho^h}$. The third line is because any extreme point is factorized into a tensor product along disjoint shells arranged in the interval $\mathbb{I}$ direction. The fourth line is $\Delta(C,D'E)_{\rho^h}=0$, also by the extreme point criterion.
Therefore,
\begin{equation}
\begin{aligned}
     \Delta(B, CD' E, D)_{\rho^h} &= S_{BCD'E} + S_{CDD'E} - S_B - S_D \\
     &= (S_{BC} - S_{C} + S_{CD'E}) + (S_{D'E} - S_{D'} + S_{CDD'}) - S_B - S_D \\
     & = S_{BC} + (S_{CD'E} - S_C + S_{D'E}) + S_{CDD'}  - S_B  - (S_{D'} + S_D) \\
      &= S_{BC} + S_{CDD'} - S_B - S_{DD'}\\
      & = \Delta(B,C,DD')_{\rho^h}.
\end{aligned}
\end{equation}
The brackets in the middle steps are replacements according to Eq.~\eqref{eq:useful-for-da}.
This completes the proof.  
\end{proof}

Lemma~\ref{lemma:quantum-dim-TFRE} implies that $d_h \ge 1$, for any superselection sector $h$ detected by a sphere shell. Such superselection sectors are associated with point excitations.
It is worth noting that, in $n \ge 3$, not every type of superselection sector is associated with point excitations, and many of them are not detected by sphere shells. The general definition of quantum dimension for immersed sectorizable regions is currently lacking. 
Nonetheless, a generalization is known for regions participating in a pairing~\cite{Shi2023-Kirby}.

\section{Automorphism of information convex sets: tools and examples}\label{app:loop-space}

Automorphism (i.e., self-isomorphism) of information convex sets plays an important role in the entanglement bootstrap. The simplest application is the definition of antiparticles, but its implication is much broader.
In this appendix, we talk about two things related to automorphisms of information convex sets.

In Section~\ref{sec:complex-im}, we introduce a ``immersion complex" $\mathfrak{M}(\omega,N)$ (Definition~\ref{def:2-complex}). It is a simplicial complex whose points represent immersed regions. It is an attempt to provide a compact mathematical way to describe two things: (i) topological classes of immersion ($\omega$ into $N$) by counting the connected component ($\pi_0$) of $\mathfrak{M}$, and (ii) the topological classes of deformation of an immersed region $\Omega \looparrowright N$ back itself (keeping the intermediate configurations immersed), as the fundamental group ($\pi_1$) of a connected component of $\mathfrak{M}$. 
In Section~\ref{sec:large-deform}, we utilize tunneling (Section~\ref{sec:tunneling-trick}) to make automorphisms not available otherwise.

\subsection{Math related to immersion and deformation classes}\label{sec:complex-im}

Consider the immersion of an $n$-dimensional space with boundary into another, $\Omega \looparrowright N$. Let the homeomorphism class of $\Omega$ be $\omega$. In our convention, $\Omega$ knows all the data of the immersed region while $\omega$ forgets everything but the homeomorphism class. (We intend to first describe the mathematical problem and postpone the physical relevance to automorphisms of information convex sets and examples in the latter part of the section.)

Let $\mathcal{R}(\omega,N) = \{\Omega_1(\omega), \Omega_2(\omega), \cdots\}$ be the set of representatives of equivalence classes of immersion of $\omega$ into $N$. That is
\begin{enumerate}[itemsep=0pt]
    \item $\Omega_i(\omega) \looparrowright N$ for any $i$;
    \item $\Omega_i(\omega) \overset{N}{\not\sim} \Omega_j(\omega)$ if $i \ne j$;
    \item $\Omega_i(\omega)$ is homeomorphic to $\omega$ for any $i$;
    \item  any $\Omega \looparrowright N$ homeomorphic to $\omega$ must satisfy $\Omega \overset{N}{\sim } \Omega_i(\omega)$ for some $i$.
\end{enumerate}
Here, $Y \overset{N}{\sim } Z $ means two immersed regions $ Y, Z \looparrowright N$ can be smoothly deformed to each other on $N$, where the intermediate configurations are immersed.
In particular, if there is no way to immerse $\omega$ to $N$, $\mathcal{R}(\omega,N) = \emptyset$. Now we are ready to introduce the ``immersion complex".

\begin{definition}[Immersion complex]\label{def:2-complex}
   Let $\omega$ and $N$ be two $n$-dimensional manifolds, possibly with boundaries. We define the immersion complex $\mathfrak{M}(\omega, N)$ as 
   \begin{equation}\label{eq:M-m}
       \mathfrak{M}(\omega, N) \equiv \coprod_{\Omega \in \mathcal{R}(\omega, N)} \mathfrak{m}(\Omega, N).
    \end{equation}
    Here, $\coprod$ means disjoint union (of topological spaces), 
    and
     $\mathfrak{m}(\Omega, N)$, for any $\Omega \looparrowright N$, is the simplicial 2-complex such that 
   \begin{enumerate}
    \item  $\Omega'\looparrowright N$ represents a point (0-simplex) of $\mathfrak{m}(\Omega, {N})$  
    if and only if $\Omega' \overset{N}{\sim} \Omega$.
    \item \label{creterion-2} Two distinct points $\Omega,\Omega' \in \mathfrak{m}(\Omega, {N})$ are connected by a 1-simplex\footnote{A 1-simplex is a line segment connecting two points. One may alternatively call it a 1-cell.} of $\mathfrak{m}(\Omega, {N})$ if and only if $\Omega\cap \Omega'$ can be converted to $\Omega$ and $\Omega'$ by two sequences of extensions (one for each). 
    \item  Three distinct points $\Omega',\Omega'',\Omega''' \in \mathfrak{m}(\Omega, {N})$ are connected by a triangle (2-simplex) of $\mathfrak{m}(\Omega, {N})$ if and only if both of the following holds
    \begin{itemize}
        \item  $\Omega \cap \Omega' \cap \Omega''$ can be converted to $\Omega \cap \Omega'$, $\Omega' \cap \Omega''$ and $\Omega \cap \Omega''$ by three sequences of extensions (one for each).
        \item Points in each of the three pairs $(\Omega,\Omega')$, $(\Omega',\Omega'')$ and $(\Omega,\Omega'')$ are connected by a 1-simplex of $\mathfrak{m}(\Omega, {N})$, according to rule \ref{creterion-2} above.
    \end{itemize}      
\end{enumerate}
\end{definition}

\begin{remark}
We emphasize that the definition above serves as an attempt to formulate the mathematical problem, but we do not claim it has the desired mathematical rigor. Even if it does, we do not know if it is the most convenient to work with.
In practice, it is likely to be useful to take some cellular decomposition of manifold $N$ and only consider only ``large but finite-sized" immersed regions $\Omega$. By doing so, the number of 1-simplex connecting a given point will be finite. (This setup is also similar to entanglement bootstrap, where the manifold $N$ will be interpreted as the manifold on which we define the reference state, and thus $N$ is equipped with a coarse-grained lattice.)
\end{remark}

By definition $\mathfrak{m}(\Omega, N) = \mathfrak{m}(\Omega', N)$ if $\Omega' \overset{N}{\sim} \Omega$. Furthermore, $\mathfrak{m}(\Omega, N)$ must be connected. Therefore, $\mathfrak{M}(\omega, N)$ is the disjoint union of these connected components.
Also, by definition, for any triangle of $\mathfrak{m}(\Omega, N)$, its three edges must be included in $\mathfrak{m}(\Omega, N)$, and for any 1-simplex of $\mathfrak{m}(\Omega, N)$ its two endpoints must be included in $\mathfrak{m}(\Omega, N)$. This means both $\mathfrak{m}(\Omega, N)$ and $\mathfrak{M}(\omega, N)$ are simplicial 2-complexes.

\begin{remark}
    One may ask if it is meaningful to add higher dimensional $k$-simplexes ($k\ge 3$) instead of stopping at $k=2$. The relevance to physics we can think of so far (as we explain shortly) depends on $\pi_0$ and $\pi_1$. Adding higher-dimensional simplexes will not change them. Nonetheless, it remains an interesting question if higher homotopy (or homology) groups of $\mathfrak{M}$ can provide further information.\footnote{We thank John McGreevy for this question.}
\end{remark}

\subsubsection{Physical relevance of immersion complex through the homotopy groups}   

It is easy to explain why the number of connected components $\pi_0$ of $\mathfrak{M}(\omega, N)$ is of physical relevance. Consider the entanglement bootstrap setup where $N$ is the manifold on which we define our reference state $\sigma_N$. The $n$-dimensional version of axioms {\bf A0} and {\bf A1} are satisfied everywhere on $N$. Then $\pi_0(\mathfrak{M}(\omega, N))$, i.e., the number of connected components of $\mathfrak{M}(\omega, N)$, is the number of inequivalent immersions of the topological space $\omega$ in $N$. For instance,  $\pi_0(\mathfrak{M}(\textrm{annulus}, S^2))$ has two elements.\footnote{In general, $\pi_0(\mathfrak{M}(\omega, S^2))$ refers to the set of connected components of $\mathfrak{M}(\omega, S^2)$, and it does not automatically form a group. In the special case that $\omega$ is the \textrm{annulus}, however, there are ways to view the two components as a group $\mathbb{Z}_2$. The rest of the paper is independent of whether to make such an assignment.}  
We discussed this in the main text without using this notation. In broader contexts, $\pi_0(\mathfrak{M})$ is the set of regular homotopy classes of immersed regions, which has been a topic of study in topology; see, e.g., \cite{smale1959classification,hass1985immersions,pinkall1985regular}.  Within each connected component, $\mathfrak{m}(\Omega,N)$, $\Omega \in \mathcal{R}(\omega,N)$, we can smoothly deform the region and the isomorphism theorem implies that such deformation preserves the information convex set. Configurations in two connected components are harder to relate, although tunneling (Section~\ref{sec:tunneling-trick}) provides one way to generate isomorphisms between such configurations (when a certain Abelian sector exists).

The fundamental group of the connected components of $\mathfrak{M}$, i.e., $\pi_1(\mathfrak{m}(\Omega,N))$, by \eqref{eq:M-m} is relevant to the study of automorphism of information convex sets:  
\begin{itemize}
    \item Immersed regions up to small deformations are represented by nearby points connected by 1-simplex in $\mathfrak{m}(\Omega,N)$. For each oriented loop $\vec\gamma$ in $\mathfrak{m}(\Omega,N)$ that starts at $\Omega$ and ends at $\Omega$, we have an associated automorphism:
    \begin{equation}\label{eq:auto-gamma}
        \Phi(\vec\gamma):\quad \Sigma(\Omega) \to \Sigma(\Omega).
    \end{equation}  
    The reason \eqref{eq:auto-gamma} makes sense is that each oriented 1-simplex connecting two adjacent points $\Omega$ and $\Omega'$, defines the following isomorphism $\Sigma(\Omega) \to \Sigma(\Omega') $: we first reduce states in $\Sigma(\Omega) $ to $\Omega\cap \Omega'$ by elementary steps of restrictions, and then we extend the support of these states to $\Omega'$ by elementary steps of extensions. These steps are isomorphisms between information convex sets. After we apply these isomorphisms on $\vec\gamma$, in a sequence given by the orientation, we generate a unique automorphism \eqref{eq:auto-gamma}.

    \item Triangles (2-simplex) are the places where the loop can deform ($\vec\gamma \to \vec\gamma'$) without affecting the automorphism. We illustrate the deformation as
    \begin{equation}
        \includegraphics[width=0.58 \textwidth]{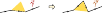}
    \end{equation} 
    where only part of the loops $\vec\gamma $ and $\vec\gamma'$ in the neighborhood of the triangle are shown.
    The fact that $\Phi(\vec\gamma) = \Phi(\vec\gamma')$ can be checked from  (Definition~\ref{def:2-complex}) and the discussion below Eq.~\eqref{eq:auto-gamma}.
\end{itemize}
We therefore argue that $\pi_1(\mathfrak{m}(\Omega,N))$ is the information that determines the nontrivial classes of deformation of immersed region $\Omega$ back to itself in the background manifold $N$. Automorphisms of $\Sigma(\Omega)$ coming from smooth deformations can be different \emph{only if} two closed loops $\vec\gamma$ and $\vec\gamma'$ (both pass $\Omega$) represent two different elements of $\pi_1(\mathfrak{m}(\Omega,N))$. In particular, if the loop $\vec\gamma$ can shrink to a point on $\mathfrak{m}(\Omega,N)$, the associated automorphism $\Phi(\vec\gamma)$ must be the (trivial) identity map. 

\subsubsection{Examples}

We give a few simple examples to illustrate the immersion complex (Definition~\ref{def:2-complex}) and discuss its physical relevance. Recall 
   \begin{equation}\label{eq:M-m-2}
       \mathfrak{M}(\omega, N) \equiv \coprod_{\Omega \in \mathcal{R}(\omega, N)} \mathfrak{m}(\Omega, N),
    \end{equation}
 where $\mathcal{R}(\omega, N)$ is the set of representatives of immersed regions homeomorphic to $\omega$, immersed in $N$. 
Our first example is about annuli immersed in a disk.
   Let $\omega$ be the annulus and $N= B^2$ be the disk.
   \begin{equation}
       \mathfrak{M}(\textrm{annulus}, B^2) = \coprod_{j=0}^{\infty} \mathfrak{m}(X(j),B^2),
   \end{equation}
   where $X(j)$ for $j \ge 0$ represents an immersed annulus with turning number $j$. In particular, we can let
    $X(0) =  X_8$ be the immersed ``figure-8" and let $X(1) = X$ be the embedded annulus. 
We expect $\mathfrak{m}(X, B^2)$ to be simply connected and $\pi_1(\mathfrak{m}(X, B^2))$ is trivial.\footnote{If we deform $X$ back to itself, keeping the intermediate configurations embedded, the deformation must correspond to a loop in $\mathfrak{m}(X, B^2)$ that can shrink to a point. The argument is precisely that used to prove Lemma~4.3 of \cite{shi2020fusion}. However, for the deformations allowing immersed intermediate configurations, the argument does not apply.}  
   What is $\pi_1(\mathfrak{m}(X_8, B^2))$? From the way to turn the figure-8 inside out by a $180^\circ$ rotation showing in Eq.~\eqref{eq:figure-8-anti-attempt}, we see that $\pi_1(\mathfrak{m}(X_8, B^2))$ has a quotient group $ \mathbb{Z}_2$. However, from Ref.~\cite{Kodama2005}, we anticipate that $\pi_1(\mathfrak{m}(X_8, B^2))= \mathbb{Z}$.\footnote{I thank John McGreevy, Joel Hass, and Ryan Budney for the reference.}

The second example is about annuli immersed in a sphere.  
   \begin{equation}
       \mathfrak{M}(\textrm{annulus}, S^2) = \mathfrak{m}(X,S^2) \coprod  \mathfrak{m}(X_8,S^2),
   \end{equation}
   where $X$ and $X_8$ are the only two classes of immersed annuli in the sphere (as discussed in Table~\ref{tab:figure-8}). Thus $ \mathfrak{M}(\textrm{annulus}, S^2) $ has two connected components. 
   What are the fundamental groups $\pi_1$ of the connected components? We anticipate that  $\pi_1(\mathfrak{m}(X,S^2))$ has a quotient group $\mathbb{Z}_2$.
   The related observation is that $X$ can turn inside out on the sphere and the fact that such a process can turn anyon $a\in \calC$ into anti-anyon $\bar{a} \in \calC$, which is, in general, a different sector. (For a related discussion, see Appendix G of \cite{shi2020fusion}.) It is worth understanding if  $\pi_1(\mathfrak{m}(X,S^2))$ itself is $\mathbb{Z}_2$.  Similiarly, for the figure-8 annulus $X_8$, it is worth understanding if $\pi_1(\mathfrak{m}(X_8,S^2))$ is $\mathbb{Z}_2$; if this is true, the definition of anti-sector on $X_8$, considered in Eq.~\eqref{eq:figure-8-anti-attempt} is justified.

   One simple example in 3-dimensional space is 
     \begin{equation}\label{eq:eversion}
       \mathfrak{M}(\textrm{sphere shell}, B^3) = \mathfrak{m}(Z,B^3),
   \end{equation}
   where $Z$, on the right-hand side, represents the embedded sphere shell in the 3-dimensional ball $B^3$. Eq.~\eqref{eq:eversion} follows from the fact that there is a unique way to immerse a sphere $S^2$ in $B^3$ (or $R^3$), up to regular homotopy; this is an influential early result by Smale \cite{smale1959classification}, which is also the topic of video~\cite{outside-in}. We are wondering if $\pi_1(\mathfrak{m}(Z,B^3)) = \mathbb{Z}_2$; if this is the case, all methods of turning a sphere inside out must be topologically equivalent. It is further interesting to consider 
        \begin{equation}\label{eq:handle-body}
       \mathfrak{M}(\textrm{genus-$g$ handlebodies}, B^3) = \mathfrak{m}(G_g,B^3),
   \end{equation}
  where $G_g$ is an embedded genus-$g$ handlebody and $g$ is a positive integer.  As is indicated in Eq.~\eqref{eq:handle-body}, $\mathfrak{M}$ is connected in this example. Genus-$g$ handlebodies detect the ``graph excitations" studied in \cite{knots-paper,Shi2023-Kirby}, where for $g=1$ these graph excitations are flux loops. In general, $\pi_1(\mathfrak{m}(G_g,B^3))$ can have nontrivial structure even though $G_g$ are sectorizable regions. 

 \begin{figure}[h]
    \centering
    \includegraphics[width=0.62\textwidth]{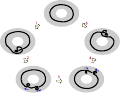}
    \caption{Illustrated is a deformation process, which corresponds to an oriented loop in $\mathfrak{m}(X, X_+)$. Here, $X$ is an annulus (thick black loop in the top figure) embedded in a wider annulus $X_+$ (gray).}
    \label{fig:annulus-in-annulus}
\end{figure}

Our last example is in 2D, where the background $N$ is an annulus. 
   Consider an annulus $X$ embedded in an annulus $N= X_+$ that thickens $X$, as shown in Fig.~\ref{fig:annulus-in-annulus}. We ask
   \begin{equation}
      \pi_1( \mathfrak{m}(X, X_+)) = ?
   \end{equation}
   Note that, $\pi_1( \mathfrak{m}(X, X_+))$ can be nontrivial even if $ \mathfrak{m}(X, B^2)$ is trivial. Intuitively, in the process described in Fig.~\ref{fig:annulus-in-annulus}, a pair of twists is created locally and then transported around the annulus before they annihilate again. (The process in Fig.~\ref{fig:annulus-in-annulus} is inspired by Fig.~8 of \cite{hass1985immersions} as well as Fig.~7 of \cite{Freedman2019}.) Can this process represent a nontrivial element of $\pi_1( \mathfrak{m}(X, X_+)) $? We do not know the answer. If this process indeed corresponds to a non-contractible loop in $\mathfrak{m}(X, X_+)$, it may have nontrivial physical consequences in contexts with topological defects.  Recall that references state on the annulus $X_+$ may come from a system with a defect line that passes through $X_+$  as discussed in Section~\ref{sec:transportation}.

In summary, the problem we try to formulate in this section is meant to be a pure topology problem. This problem is carefully separated from the study of quantum states on such regions. Nonetheless, we believe this problem is a crucial mathematical problem that will help us gain a clear understanding of the automorphism of information convex sets. 
We wonder if the existing tools in math literature are enough to produce answers to these questions. 
The question on $\pi_0(\mathfrak{M}(\omega,N))$ is essentially the classification of immersion up to regular homotopy. Some related results can be looked up\footnote{Even in this case, we do not always find the results we want. We are most interested in manifolds with boundaries (such as punctured manifolds) immersed in a manifold with \emph{the same} dimension. Sometimes, a math result we spent time searching for came out fresh~\cite{Freedman2022}.} in existing math papers e.g., \cite{smale1959classification,hass1985immersions,Kodama2005}. 
To our knowledge, the fundamental group $\pi_1(\mathfrak{m}(\Omega,N))$, e.g., for examples listed above, has not been studied in the math literature in any systematic way.  (That is why we only provided guesses to most answers!) In any case, entanglement bootstrap provides (adds) a motivation for the study of such mathematical objects.

\subsection{Automorphisms utilizing tunneling}\label{sec:large-deform}

Tunneling trick (Section~\ref{sec:tunneling-trick}) provides extra isomorphism of information convex sets by ``large changes" of the region, such as adding or removing a twist on a 1-handle. 
This is in contrast with the smooth deformation built from elementary steps. 
Tunneling is not described by lines in $\mathfrak{m}(\Omega,N)$. 
The tunneling trick, as described in the main text, works for the specific context of 2D. Below, we broaden the meaning of tunneling and use it to refer to any ``large change" of an immersed region (together with the associated information convex set).

By allowing both smooth deformations and tunneling, we can obtain more diverse classes of automorphisms of $\Sigma(\Omega)$. 
We will not discuss the entire scope of how this might work, acknowledging that it is an interesting mathematical problem. Instead, we describe a nontrivial example in 2D.

\begin{figure}[h]
    \centering
    \includegraphics[width=0.65\textwidth]{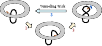}
    \caption{Automorphism of the information convex set of an immersed punctured torus, where tunneling is used. The punctured torus is visualized as an annulus (horizontal, gray) with a 1-handle (black) connecting both boundaries of the annulus.}
    \label{fig:large-deformation}
\end{figure}

Let $\mathcal{W}$ be a punctured torus immersed in a disk $B^2$, as shown in the left figure of Fig.~\ref{fig:large-deformation}. If there is an Abelian sector on figure-8 region, $\mu \in \calC_8$, then we can design an automorphism of $\Sigma(\mathcal{W})$ utilizing the tunneling trick in one of the steps; see the process shown in Fig.~\ref{fig:large-deformation}. The entire process takes the immersed punctured torus $\mathcal{W}$ back to itself. Steps 1 and 2 are smooth deformations.\footnote{As a side note, I described steps 1 and 2 in a math student seminar in 2020 \cite{shi-2020-student}; see the first backup slide, which was my attempt to ask about an unknown-to-me terminology. Later, my collaborators and I learned in the projects \cite{knots-paper,Shi2023-Kirby} that the terms I looked for were immersion and regular homotopy. Now step 3 is added, which gives a reason to ask more questions.}
Step 3 is a tunneling process. 
Thus, the automorphism cannot be associated with a closed loop in $\mathfrak{m}(\mathcal{W}, B^2)$.

Can we make use of the topological process in Fig.~\ref{fig:large-deformation} to understand nontrivial universal properties of gapped topologically ordered phases? 
We anticipate that the answer is yes. An idea is to relate this to the problem of defining and extracting topological spins from a single wave function; we postpone the discussion to future work.

\section{Strong isomorphism conjecture in arbitrary dimensions}\label{sec:SIConj-arbitrary}

Here is the statement of strong isomorphism conjecture for an arbitrary space dimension $n$, generalizing $n=2$ case (Conjecture~\ref{conj:2D-strong}) in the main text.  We denote the $n$-dimensional sphere on which we define the reference state as $\mathbf{S}^n$. Axioms {\bf A0} and {\bf A1} are satisfied everywhere.

\begin{Conjecture} 
\label{conj:strong}
    Suppose two immersed regions $\Omega,\Omega' \looparrowright \mathbf{S}^n$ are homeomorphic, then
    \begin{equation}
        \Sigma(\Omega) \cong \Sigma(\Omega').
    \end{equation}
\end{Conjecture}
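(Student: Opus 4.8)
The plan is to reduce Conjecture~\ref{conj:strong} to the single input ``there is an Abelian sector detected by the immersed figure-8 annulus $X_8 \looparrowright \mathbf{S}^n$'' (the $n$-dimensional analog of Conjecture~\ref{conj:1}), and then run the higher-dimensional version of the tunneling argument of Section~\ref{sec:tunneling-trick}. First I would set up the structure theory: a connected immersed region $\Omega \looparrowright \mathbf{S}^n$ that is not the whole sphere is an orientable $n$-manifold with boundary, and I would pick a handle decomposition of $\Omega$ relative to a single $0$-handle. By sliding handles around each other (using that handles attached along disjoint framed spheres in the boundary of the $0$-handle can be reordered, exactly as in the $2$D argument in Fig.~\ref{fig:tunnel-general}), I would argue that the regular-homotopy class of $\Omega$ in $\mathbf{S}^n$ is recorded entirely by the framings/twists of the attaching data of the handles. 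Thus each homeomorphism class $\omega$ has a ``standard'' representative $\Omega^\star$, and every $\Omega \looparrowright \mathbf{S}^n$ homeomorphic to $\omega$ differs from $\Omega^\star$ only by a finite sequence of twist insertions/removals on handles.

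The next step is the tunneling lemma in general dimension: tunneling a figure-8 $X_8$ (more precisely, the relevant immersed sphere-shell-like region, viewed as a $1$-handle with a twist) through a handle of $\Omega$ changes that handle's framing, and if the sector $\mu \in \calC_8$ transported is Abelian ($d_\mu = 1$), the induced map $\Phi(\mu): \Sigma(\Omega) \to \Sigma(\Omega')$ is an isomorphism. I would reproduce the decomposition $\Phi(\mu) = \Gamma^{\rm det}\circ \Phi^{\rm deform}\circ \Gamma^{\rm att}$ from Eq.~\eqref{eq:tunnel-channel-decompose}: attachment is a merging/Petz-map step, deformation is an isomorphism by the isomorphism theorem, detachment is a partial trace; and the Abelian condition $\Delta(B,C,D)_{\rho^\mu}=0$ (Lemma~\ref{lemma:d-TFRE}, in its $n$-dimensional form Lemma~\ref{lemma:quantum-dim-TFRE}) forces the relevant conditional mutual informations to vanish, so $\Gamma^{\rm att}$ and $\Gamma^{\rm det}$ are reversible. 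Concatenating these isomorphisms along the finite sequence of twist moves converting $\Omega$ to $\Omega^\star$ yields $\Sigma(\Omega) \cong \Sigma(\Omega^\star)$; doing the same for $\Omega'$ gives $\Sigma(\Omega') \cong \Sigma(\Omega^\star)$, and composing gives the claim. The disconnected case reduces to the connected case componentwise, and the case $\Omega = \mathbf{S}^n$ is trivial.

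The main obstacle is clearly the hypothesis itself: the whole argument is conditional on the existence of an Abelian sector on the relevant immersed sphere shell, which in dimension $n=2$ is Conjecture~\ref{conj:1} and is proved in this paper only for Abelian anyon theories (Proposition~\ref{Prop:Abelian-result}), and in higher dimensions is not even fully formulated, let alone proved. So I would present Conjecture~\ref{conj:strong} honestly as \emph{conditional}: ``assuming the $n$-dimensional analog of Conjecture~\ref{conj:1}, the strong isomorphism conjecture holds,'' and then note that for $n=2$ and Abelian anyon theories this gives an unconditional theorem via Proposition~\ref{Prop:Abelian-result}. A secondary technical gap, which I would flag rather than resolve, is that in $n\ge 3$ the ``twists'' on handles are classified by homotopy groups of orthogonal/rotation groups rather than by a single $\mathbb{Z}$ of turning numbers, so one must check that \emph{every} such framing change is realizable by tunneling a figure-8 (or a short list of elementary immersed regions) — establishing this generating set is the real content that would need to be filled in, and the transportation-experiment viewpoint of Section~\ref{sec:transportation} suggests the right language for it.
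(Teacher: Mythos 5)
The statement you are addressing is not proved in the paper: it is stated in Appendix~\ref{sec:SIConj-arbitrary} as an open conjecture, and even its two-dimensional specialization (Conjecture~\ref{conj:2D-strong}) is established only \emph{conditionally} on Conjecture~\ref{conj:1}, which in turn is proved only for Abelian anyon theories. Your proposal is therefore correctly pitched as a conditional reduction rather than a proof, and the route you sketch --- standardize $\Omega$ by handle slides, realize the remaining discrepancy between $\Omega$ and $\Omega^\star$ as a sequence of framing changes, and implement each framing change by tunneling an Abelian-sector-carrying immersed region via $\Phi(\mu)=\Gamma^{\rm det}\circ\Phi^{\rm deform}\circ\Gamma^{\rm att}$ --- is exactly the higher-dimensional extrapolation of the paper's Section~\ref{sec:tunneling-trick} and Fig.~\ref{fig:tunnel-general}. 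You have also honestly flagged the two places where the argument is incomplete, and both are genuine.

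To make the gaps concrete: first, the hypothesis you condition on is not merely unproven in $n\ge 3$, it is not yet \emph{formulable} in the paper's framework, because (as Appendix~\ref{app:equivalent-dim} notes) a general definition of quantum dimension for immersed sectorizable regions is lacking in $n\ge 3$; Lemma~\ref{lemma:quantum-dim-TFRE} covers only immersed sphere shells, whereas the region you need to tunnel through a $k$-handle for $k\ge 1$ in $n\ge 3$ need not be a sphere shell, so the ``Abelian'' condition $\Delta(B,C,D)=0$ that makes $\Gamma^{\rm att}$ and $\Gamma^{\rm det}$ reversible has no established meaning for it. Second, your topological reduction assumes that two homeomorphic immersions of $\omega$ in $\mathbf{S}^n$ differ only by twist insertions on handles realizable by your elementary tunneling moves. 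In $n=2$ this follows from the turning-number classification plus the sphere move; in $n\ge 3$ the Smale--Hirsch classification of regular homotopy classes involves homotopy groups of Stiefel manifolds attached to handles of \emph{every} index, and you would need to exhibit a generating set of elementary immersed regions (one per index and per generator of the relevant homotopy group) each carrying an Abelian sector. Neither the paper nor your proposal supplies this, so the correct status of your write-up is: a plausible program that recovers the paper's own (also conditional) 2D argument as its base case, not a proof of Conjecture~\ref{conj:strong}.
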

Suppose this conjecture is true, then we have similar beliefs (as 2D, discussed in the discussion section) that the explicit isomorphisms that come up during the proof can be informative and useful in extracting universal information of the emergent topological quantum field theory associated with the reference state.

Besides, the strong isomorphism conjecture (Conjecture~\ref{conj:strong}) seems to resonate with another basic question: ``What is the definition of the quantum dimension for a superselection sector associated with an arbitrary immersed sectorizable region $S \looparrowright \mathbf{S}^n$?" For dimensions $n \ge 3$, this question is open.
Many (if not all) nontrivially immersed $S$ have an embedded version $S^\star \hookrightarrow \mathbf{S}^n$, where $S^\star$ is in the homeomorphism class of $S$.
Since $S^\star$ is embedded, the quantum dimension of a superselection sector  $I \in \calC_{S^\star}$ (characterized by $\Sigma(S^\star)$) can be defined from the entropy difference of an extreme point $\rho^I$ with the vacuum $\sigma$:
\begin{equation}\label{eq:d_I-familiar}
    d_I \equiv \exp{\left(\frac{S(\rho_{S^\star}^I)-S(\sigma_{S^\star})}{2}\right)}, \quad I \in \calC_{S^\star}.
\end{equation} 
While Eq.~\eqref{eq:d_I-familiar} is familiar in early works, we are still interested in knowing if $d_I \ge 1$ in general. Nontrivial checks from  3D  
and gapped domain walls  
are consistent with $d_I \ge 1$, but the general proof for an arbitrary $n$ is lacking.  
This puzzle can be restated as a question about the vacuum state: ``Does the vacuum state have the absolute minimum entropy among states in $\Sigma(\Omega)$, for any $\Omega \hookrightarrow \mathbf{S}^n$?" We would like to conjecture the affirmative direction.

  \bibliographystyle{ucsd}
\bibliography{Bibliography}  

\begingroup\raggedright\begin{thebibliography}{10}

\bibitem{kitaev2006topological}
A.~Kitaev and J.~Preskill, ``Topological entanglement entropy,'' {\em \prl} {\bf 96} (2006), no.~11 110404.

\bibitem{levin2006detecting}
M.~Levin and X.-G. Wen, ``Detecting topological order in a ground state wave function,'' {\em \prl} {\bf 96} (2006), no.~11 110405.

\bibitem{Calabrese:2009qy}
P.~Calabrese and J.~Cardy, ``{Entanglement entropy and conformal field theory},'' {\em J. Phys. A} {\bf 42} (2009) 504005, \href{http://arxiv.org/abs/0905.4013}{{\tt 0905.4013}}.

\bibitem{Li-Haldane2008}
H.~{Li} and F.~D.~M. {Haldane}, ``{Entanglement Spectrum as a Generalization of Entanglement Entropy: Identification of Topological Order in Non-Abelian Fractional Quantum Hall Effect States},'' {\em Phys. Rev. Lett.} {\bf 101} (July, 2008) 010504, \href{http://arxiv.org/abs/0805.0332}{{\tt 0805.0332}}.

\bibitem{Marvian2013}
I.~{Marvian}, ``{Symmetry-Protected Topological Entanglement},'' {\em arXiv e-prints} (July, 2013) arXiv:1307.6617, \href{http://arxiv.org/abs/1307.6617}{{\tt 1307.6617}}.

\bibitem{Shapourian2017}
H.~{Shapourian}, K.~{Shiozaki}, and S.~{Ryu}, ``{Many-Body Topological Invariants for Fermionic Symmetry-Protected Topological Phases},'' {\em Phys. Rev. Lett.} {\bf 118} (May, 2017) 216402, \href{http://arxiv.org/abs/1607.03896}{{\tt 1607.03896}}.

\bibitem{Zou2020}
Y.~{Zou}, K.~{Siva}, T.~{Soejima}, R.~S.~K. {Mong}, and M.~P. {Zaletel}, ``{Universal Tripartite Entanglement in One-Dimensional Many-Body Systems},'' {\em Phys. Rev. Lett.} {\bf 126} (Mar., 2021) 120501, \href{http://arxiv.org/abs/2011.11864}{{\tt 2011.11864}}.

\bibitem{Kim2021}
I.~H. {Kim}, B.~{Shi}, K.~{Kato}, and V.~V. {Albert}, ``{Chiral Central Charge from a Single Bulk Wave Function},'' {\em Phys. Rev. Lett.} {\bf 128} (Apr., 2022) 176402, \href{http://arxiv.org/abs/2110.06932}{{\tt 2110.06932}}.

\bibitem{Fan2022Q}
R.~{Fan}, R.~{Sahay}, and A.~{Vishwanath}, ``{Extracting the Quantum Hall Conductance from a Single Bulk Wave Function},'' {\em Phys. Rev. Lett.} {\bf 131} (Nov., 2023) 186301, \href{http://arxiv.org/abs/2208.11710}{{\tt 2208.11710}}.

\bibitem{Cian2022}
Z.-P. {Cian}, M.~{Hafezi}, and M.~{Barkeshli}, ``{Extracting Wilson loop operators and fractional statistics from a single bulk ground state},'' {\em arXiv e-prints} (Sept., 2022) arXiv:2209.14302, \href{http://arxiv.org/abs/2209.14302}{{\tt 2209.14302}}.

\bibitem{Lin2023}
T.-C. {Lin} and J.~{McGreevy}, ``{Conformal Field Theory Ground States as Critical Points of an Entropy Function},'' {\em Phys. Rev. Lett.} {\bf 131} (Dec., 2023) 251602, \href{http://arxiv.org/abs/2303.05444}{{\tt 2303.05444}}.

\bibitem{Kim2013}
I.~H. {Kim}, ``{Long-Range Entanglement Is Necessary for a Topological Storage of Quantum Information},'' {\em Phys. Rev. Lett.} {\bf 111} (Aug., 2013) 080503, \href{http://arxiv.org/abs/1304.3925}{{\tt 1304.3925}}.

\bibitem{Shi2018}
B.~{Shi}, ``{Seeing topological entanglement through the information convex},'' {\em Phys. Rev. Res.} {\bf 1} (Oct., 2019) 033048, \href{http://arxiv.org/abs/1810.01986}{{\tt 1810.01986}}.

\bibitem{shi2020fusion}
B.~Shi, K.~Kato, and I.~H. Kim, ``Fusion rules from entanglement,'' {\em Annals of Physics} {\bf 418} (2020) 168164, \href{http://arxiv.org/abs/1906.09376}{{\tt 1906.09376}}.

\bibitem{Shi:2020domainwall}
B.~Shi and I.~H. Kim, ``{Entanglement bootstrap approach for gapped domain walls},'' {\em Phys. Rev. B} {\bf 103} (2021), no.~11 115150, \href{http://arxiv.org/abs/2008.11793}{{\tt 2008.11793}}.

\bibitem{knots-paper}
J.-L. Huang, J.~McGreevy, and B.~Shi, ``{Knots and entanglement},'' {\em SciPost Phys.} {\bf 14} (2023) 141, \href{https://scipost.org/10.21468/SciPostPhys.14.6.141}{https://scipost.org/10.21468/SciPostPhys.14.6.141}.

\bibitem{Shi2023-Kirby}
B.~{Shi}, J.-L. {Huang}, and J.~{McGreevy}, ``{Remote detectability from entanglement bootstrap I: Kirby's torus trick},'' {\em arXiv e-prints} (Jan., 2023) arXiv:2301.07119, \href{http://arxiv.org/abs/2301.07119}{{\tt 2301.07119}}.

\bibitem{Shi2019-Verlinde}
B.~{Shi}, ``{Verlinde formula from entanglement},'' {\em Phys. Rev. Res.} {\bf 2} (May, 2020) 023132, \href{http://arxiv.org/abs/1911.01470}{{\tt 1911.01470}}.

\bibitem{Hastings:2013vma}
M.~B. Hastings, ``{Classifying quantum phases with the Kirby torus trick},'' {\em Phys. Rev. B} {\bf 88} (2013), no.~16 165114, \href{http://arxiv.org/abs/1305.6625}{{\tt 1305.6625}}.

\bibitem{Freedman2019}
M.~{Freedman} and M.~B. {Hastings}, ``{Classification of Quantum Cellular Automata},'' {\em Comm. Math. Phys.} {\bf 376} (Apr., 2020) 1171--1222, \href{http://arxiv.org/abs/1902.10285}{{\tt 1902.10285}}.

\bibitem{Kitaev-2013}
A.~Kitaev, ``On the Classification of Short-Range Entangled States,'' {\em unpublished} (2013) \href{http://scgp.stonybrook.edu/archives/7874}{http://scgp.stonybrook.edu/archives/7874}.

\bibitem{Lieb1973}
E.~H. Lieb and M.~B. Ruskai, ``Proof of the strong subadditivity of quantum mechanical entropy,'' {\em J. Math. Phys.} {\bf 14} (1973), no.~12 1938--1941, \href{https://doi.org/10.1063/1.1666274}{https://doi.org/10.1063/1.1666274}.

\bibitem{Levin2005}
M.~A. {Levin} and X.-G. {Wen}, ``{String-net condensation:{\quad}A physical mechanism for topological phases},'' {\em Phys. Rev. B.} {\bf 71} (Jan., 2005) 045110, \href{http://arxiv.org/abs/cond-mat/0404617}{{\tt cond-mat/0404617}}.

\bibitem{Parent-Hamiltonian2024}
I.~H. {Kim}, T.-C. {Lin}, D.~{Ranard}, and B.~{Shi}, ``{Strict area law implies commuting parent Hamiltonian},'' {\em arXiv e-prints} (Apr., 2024) arXiv:2404.05867, \href{http://arxiv.org/abs/2404.05867}{{\tt 2404.05867}}.

\bibitem{Li2024IMF}
X.~{Li}, T.-C. {Lin}, J.~{McGreevy}, and B.~{Shi}, ``{Strict area law entanglement versus chirality},'' {\em arXiv e-prints} (Aug., 2024) arXiv:2408.10306, \href{http://arxiv.org/abs/2408.10306}{{\tt 2408.10306}}.

\bibitem{Fan2022}
R.~{Fan}, P.~{Zhang}, and Y.~{Gu}, ``{Generalized real-space Chern number formula and entanglement hamiltonian},'' {\em SciPost Physics} {\bf 15} (Dec., 2023) 249, \href{http://arxiv.org/abs/2211.04510}{{\tt 2211.04510}}.

\bibitem{Goldenfeld2023}
N.~{Goldenfeld}, ``{There's Plenty of Room in the Middle: The Unsung Revolution of the Renormalization Group},'' {\em arXiv e-prints} (June, 2023) arXiv:2306.06020, \href{http://arxiv.org/abs/2306.06020}{{\tt 2306.06020}}.

\bibitem{feynman1959plenty}
R.~P. Feynman, ``Plenty of Room at the Bottom,'' in {\em APS annual meeting}, pp.~1--7, Little Brown Boston, MA, USA, 1959.

\bibitem{ShiThesis}
B.~Shi, {\em Anyon theory in gapped many-body systems from entanglement}.
\newblock PhD thesis, Ohio State University, 2020.

\bibitem{pinkall1985regular}
U.~Pinkall, ``Regular homotopy classes of immersed surfaces,'' {\em Topology} {\bf 24} (1985), no.~4 421--434.

\bibitem{outside-in}
S.~Levy, D.~Maxwell, and T.~Munzner, ``Outside In,'' {\em AK Peters, Wellesley, MA} (1994) \href{https://www.youtube.com/watch?v=wO61D9x6lNY}{https://www.youtube.com/watch?v=wO61D9x6lNY}.

\bibitem{Petz2003}
D.~{Petz}, ``{Monotonicity of Quantum Relative Entropy Revisited},'' {\em Rev. Math. Phys.} {\bf 15} (2003) 79--91, \href{http://arxiv.org/abs/quant-ph/0209053}{{\tt quant-ph/0209053}}.

\bibitem{Kato2016}
K.~{Kato}, F.~{Furrer}, and M.~{Murao}, ``{Information-theoretical analysis of topological entanglement entropy and multipartite correlations},'' {\em Phys. Rev. A.} {\bf 93} (Feb., 2016) 022317, \href{http://arxiv.org/abs/1505.01917}{{\tt 1505.01917}}.

\bibitem{gompf1994}
R.~E. Gompf and A.~Stipsicz, {\em 4-manifolds and Kirby calculus}.
\newblock No.~20. American Mathematical Soc., 1999.

\bibitem{George-note-2011}
T.~George, ``{The Classification of surfaces with Boundary},'' {\em lecture notes,} (August 26, 2011) \href{https://www.math.uchicago.edu/~may/VIGRE/VIGRE2011/REUPapers/George.pdf}{https://www.math.uchicago.edu/~may/VIGRE/VIGRE2011/REUPapers/George.pdf}.

\bibitem{handle-slides-2015}
I.~{Moffatt} and E.~{Mphako-Banda}, ``{Handle slides for delta-matroids},'' {\em arXiv e-prints} (Oct., 2015) arXiv:1510.07224, \href{http://arxiv.org/abs/1510.07224}{{\tt 1510.07224}}.

\bibitem{Bombin2010}
H.~{Bombin}, ``{Topological Order with a Twist: Ising Anyons from an Abelian Model},'' {\em Phys. Rev. Lett.} {\bf 105} (July, 2010) 030403, \href{http://arxiv.org/abs/1004.1838}{{\tt 1004.1838}}.

\bibitem{Barkeshli2014}
M.~{Barkeshli}, P.~{Bonderson}, M.~{Cheng}, and Z.~{Wang}, ``{Symmetry Fractionalization, Defects, and Gauging of Topological Phases},'' {\em arXiv e-prints} (Oct., 2014) arXiv:1410.4540, \href{http://arxiv.org/abs/1410.4540}{{\tt 1410.4540}}.

\bibitem{Shi:2018krj}
B.~Shi and Y.-M. Lu, ``{Characterizing topological order by the information convex},'' {\em Phys. Rev. B} {\bf 99} (2019), no.~3 035112, \href{http://arxiv.org/abs/1801.01519}{{\tt 1801.01519}}.

\bibitem{Kawagoe2019}
K.~{Kawagoe} and M.~{Levin}, ``{Microscopic definitions of anyon data},'' {\em Phys. Rev. B.} {\bf 101} (Mar., 2020) 115113, \href{http://arxiv.org/abs/1910.11353}{{\tt 1910.11353}}.

\bibitem{smale1959classification}
S.~Smale, ``A classification of immersions of the two-sphere,'' {\em Transactions of the American Mathematical Society} {\bf 90} (1959), no.~2 281--290.

\bibitem{hass1985immersions}
J.~Hass and J.~Hughes, ``Immersions of surfaces in 3-manifolds,'' {\em Topology} {\bf 24} (1985), no.~1 97--112.

\bibitem{Kodama2005}
H.~{Kodama} and P.~W. {Michor}, ``{The homotopy type of the space of degree 0 immersed plane curves},'' {\em arXiv Mathematics e-prints} (Sept., 2005) math/0509694, \href{http://arxiv.org/abs/math/0509694}{{\tt math/0509694}}.

\bibitem{Freedman2022}
M.~{Freedman}, D.~{Kasprowski}, M.~{Kreck}, A.~W. {Reid}, and P.~{Teichner}, ``{Immersions of punctured 4-manifolds},'' {\em arXiv e-prints} (Aug., 2022) arXiv:2208.03064, \href{http://arxiv.org/abs/2208.03064}{{\tt 2208.03064}}.

\bibitem{shi-2020-student}
B.~Shi, ``{Entanglement bootstrap: from gapped many-body ground states to emergent laws},'' {\em Quantum Symmetries Student Seminar, Math-OSU} (2020) \href{https://bpb-us-w2.wpmucdn.com/u.osu.edu/dist/5/92910/files/2020/10/Entanglement-bootstrap-talk-Math-OSU.pdf}{https://bpb-us-w2.wpmucdn.com/u.osu.edu/dist/5/92910/files/2020/10/Entanglement-bootstrap-talk-Math-OSU.pdf}.

\end{thebibliography}\endgroup
\end{document}